\newcommand{\Var} {\mbox{$\rm{Var}$\,}}
\newcommand{\Prob} {\mbox{$\rm{Prob}$\,}}
\newtheorem{lemma}{Lemma}
\newtheorem{theorem}{Theorem}
\newtheorem{corollary}{Corollary}
\newtheorem{remark}{Remark}
\newtheorem{proposition}{Proposition}
\begin{document}

\begin{frontmatter}



\title{Coalescence in the diffusion limit of a Bienaym\'e-Galton-Watson branching process}

\author[label1]{Conrad J.\ Burden}
\ead{conrad.burden@anu.edu.au}
\author[label1,label2]{Albert C.\ Soewongsono}
\ead{albertchristian.soewongsono@utas.edu.au}
\address[label1]{Mathematical Sciences Institute, Australian National University, Canberra, Australia}
\address[label2]{School of Physical Sciences, University of Tasmania, Hobart, Australia.}

\begin{abstract}
We consider the problem of estimating the elapsed time since the most recent common ancestor of a finite random sample 
drawn from a population which has evolved through a Bienaym\'e-Galton-Watson branching process.  More specifically, we 
are interested in the diffusion limit appropriate to a supercritical process in the near-critical limit evolving over a large number of time steps. 
Our approach differs from earlier analyses in that we assume the only known information is the mean and variance of the number 
of offspring per parent, the observed total population size at the time of sampling, and the size of the sample.   
We obtain a formula for the probability that a finite random sample of the population is descended from a single ancestor in 
the initial population, and derive a confidence interval for the initial population size in terms of the final population size and the 
time since initiating the process. We also determine a joint likelihood surface from which confidence regions can be determined for simultaneously 
estimating two parameters, (1)~the population size at the time of the most recent common ancestor, and (2)~the time elapsed since the existence of 
the most recent common ancestor.
\end{abstract}

\begin{keyword}
Coalescent \sep Diffusion process \sep Branching process \sep Most recent common ancestor
\end{keyword}
\end{frontmatter}


%
%
\section{Introduction}
\label{sec:Introduction}

Suppose one observes the current size of a population which has been evolving through a Bienaym\'e-Galton-Watson (BGW) branching process 
since some unknown time in the past.  Is it possible to infer the time that has elapsed since the most recent common ancestor (MRCA) of the 
current population, and of a random sample taken from the current population?  

This paper is a continuation of an earlier paper \citep{burden2016genetic} which gives a partial answer to this question in the form of a joint 
estimate of the elapsed time since coalescence to the MRCA and the corresponding size of the entire population at the time of the MRCA.  
Our new results extend the previous calculation to the MRCA of finite random samples of the population, and to a proposed procedure for obtaining 
confidence regions in the joint parameter space of the time since coalescence and the population size at the time of coalescence.  

There have been many studies of coalescence in BGW process \citep{lambert2013coalescent,pardoux2016probabilistic,grosjean2018genealogy}, 
specifically for critical \citep{durrett1978genealogy,athreya2012coalescence} sub-critical \citep{lambert2003coalescence,le2014coalescence} 
and supercritical processes.  In the current paper we concentrate on supercritical processes in the near-critical limit, or, equivalently, the continuous-time diffusion limit.  Closest to our treatment is that of \cite{o1995genealogy}, who derives a formula (with a minor correction in \cite{kimmel2015branching}) 
for the coalescent time for sample of size $n = 2$ as a fraction of the time since initialisation of the BGW process with a known initial population size.  
Numerical simulation of an ensemble of BGW processes initiated from a single founder by \cite{cyran2010alternatives} is consistent with this formula.  
The ultimate aim of Cyran and Kimmel's paper was to compare the effectiveness of various population genetics models in dating the time since 
mitochondrial Eve (mtE).  More recently O'Connell's formula has been generalised to samples of size $n > 2$ by \cite{harris2019coalescent}.  

Our approach differs from previous treatments in that we initialise the BGW process at the (a priori unknown) time of coalescence of a random 
sample of the current population, treat the time since coalescence and initial population size as parameters to be estimated, and determine confidence 
regions in the two-dimensional parameter space defined by these two parameters.  Central to our approach is the solution by \cite{feller1951diffusion} to 
the forward Kolmogorov equation defining the diffusion limit of a BGW process. This approach is closer in spirit to the intention of Cyran and Kimmel's 
aim of determining the time since mtE without prior knowledge of the time at which the process was initiated from a single founding individual.  
It also enables a direct comparison with the analogous solution by \cite{Slatkin:1991uq} to the problem of determining the distribution of the time since 
coalescence of a sample of size 2 in a Wright-Fisher (WF) population with exponential growth.  

The main results of the paper are stated in three theorems.  Theorem~\ref{theorem1} in Section~\ref{sec:CoalescenceOfN} gives a formula for the probability 
that a uniform random sample is descended from a single individual in the initial population of a BGW process, given the initial and final population sizes 
and the time since the process was initiated.  Theorem~\ref{theorem2} in Section~\ref{sec:ConfidenceInterval} gives confidence intervals for the unknown 
initial population size in a BGW process when the final population size and time since initiation are known.  These two theorems enable us to map out 
the likely initial conditions in terms of two scaled parameters arising from the diffusion limit; (1) a scaled population size at the time of initialisation of the 
process, denoted $\kappa_0$, and (2) the scaled time elapsed from initialisation of the process to observation of the current population, denoted $s$.  
Theorem~\ref{theorem3} in 
Section~\ref{sec:AsymptoticBehaviour}, which is contingent on a technical conjecture stated in the Section~\ref{sec:LikelihoodSurface}, 
gives a universal asymptotic density function for the likely time since coalescence of a sample in the limit of large scaled population sizes.   

The layout of this paper is as follows.  The diffusion limit of a BGW process, notational conventions employed in subsequent sections, and a derivation 
of Feller's solution for a specified initial population are detailed in Section~\ref{sec:DiffusionLimit}.  Arguments leading to Theorems~\ref{theorem1} 
and \ref{theorem2} and the proofs of these theorems are in Sections~\ref{sec:CoalescenceOfN} and \ref{sec:ConfidenceInterval} respectively.  
Section~\ref{sec:CoalescenceOfN} is a generalisation of analogous derivations in \citep{burden2016genetic}, which concentrated on the entire population 
as opposed to a finite random sample.   
Section~\ref{sec:LikelihoodSurface} is devoted to deriving a function ${\cal L}(s, \kappa_0; \kappa)$ defined over the $(s, \kappa_0)$,
the contours of which are confidence regions for the scaled time since coalescence of a sample and the scaled population size at the time of coalescence.  
Here $\kappa$ is the scaled final population size at the time of sampling.  By integrating out the initial population size $\kappa_0$, a comparison is made 
with the distributions derived by \cite{Slatkin:1991uq} for a sample of size $n = 2$ taken from an exponentially growing WF population. 
The asymptotic $\kappa \rightarrow \infty$ limit is analysed in Section~\ref{sec:AsymptoticBehaviour}, and this enables us to compare our analytic formulae with a numerical 
simulation in Section~\ref{sec:NumericalSimulation}.  Results are discussed and conclusions drawn in Section~\ref{sec:Conclusions}.  
A technical appendix is devoted to details of calculations arising in Section~\ref{sec:LikelihoodSurface}.  

%
%
\section{The diffusion limit of a Bienaym\'e-Galton-Watson (BGW) branching process}
\label{sec:DiffusionLimit}

Consider a Bienaym\'e-Galton-Watson (BGW) branching process consisting of a population of $M(t)$ haploid individuals reproducing in discrete, non-overlapping 
generations $t = 0, 1, 2, \ldots$, with initial population size $M(0) = m_0$.  The numbers of 
offspring per individual at each time step $t$ are identically and independently distributed (i.i.d.) random variables $S_\alpha$, 
$\alpha = 1, \ldots, M(t)$, whose common distribution has mean and variance 
\begin{equation}
\mathbb{E}(S_\alpha) = \lambda, \qquad \Var(S_\alpha) = \sigma^2,  \label{lambdaSigmaDef}
\end{equation}
and finite moments to all higher orders.  Thus 
\begin{equation}	\label{eq:MAtTPlus1}
M(t + 1) = \sum_{\alpha = 1}^{M(t)} S_\alpha, \quad i = 1, 2.  
\end{equation}

We are interested in the supercritical case of the diffusion limit studied by \citet{feller1951diffusion} in which the initial population $m_0$ becomes infinite 
while the growth rate simultaneously approaches the critical value $\lambda \rightarrow 1$ in such a way that the parameter 
\begin{equation}
\kappa_0 = \frac{2m_0 \log \lambda}{\sigma^2},  \label{kappa0Def}
\end{equation}
remains fixed.  As argued in \citet{burden2016genetic}, this parameter has a straightforward physical meaning:  
Suppose the initial population is divided into two subpopulations of 
roughly equal size.  If the initial population size $m_0$ is such that $\kappa_0 >> 1$, then with high probability descendant lineages of both subpopulations will survive 
as $t \rightarrow \infty$.  On the other hand, if $\kappa_0 << 1$ and the population does not go extinct, the entire surviving population as $t \rightarrow \infty$ will, 
with high probability, be descended from an individual in one of the two subpopulations.  Thus $\kappa_0 \approx 1$ sets a scale for inferring coalescent events.  
With this in mind, in the following lemma we introduce continuum quantities $K(s)$ and $s$ adapted from \citet[][Eq.~(36)]{burden2016genetic}.  

%
\begin{lemma} 
Define a continuum time $s$ and continuous random population sizes $K(s)$ by 
\begin{equation}	\label{sAndKDefns}
s = t\log\lambda, \quad K(s) = \frac{2 M(t)\log \lambda}{\sigma^2}.  
\end{equation}
The corresponding forward Kolmogorov equation for the density $f_K(\kappa; s)$ in the limit $\log\lambda \rightarrow 0$ and 
$m_0 \rightarrow \infty$ such that $\kappa_0$ defined by Eq,~(\ref{kappa0Def}) remains fixed is  
\begin{equation}	\label{fwdKolmog} 
\frac{\partial f_K(\kappa; s)}{\partial s} = - \frac{\partial(\kappa f_K(\kappa; s))}{\partial \kappa} + \frac{\partial^2(\kappa f_K(\kappa; s))}{\partial \kappa^2},
\end{equation}
where 
\begin{equation}
f_K(\kappa; s)\,d\kappa = \Prob(\kappa \le K(s) < \kappa + d\kappa). 
\end{equation}  
\end{lemma}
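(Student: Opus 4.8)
The plan is to recognise (\ref{fwdKolmog}) as the forward (Fokker--Planck) equation associated with the infinitesimal generator of the limiting diffusion, and to read off that generator from a Kramers--Moyal expansion of the one--step Chapman--Kolmogorov equation of the discrete chain $M(t)$. Throughout write $\varepsilon = \log\lambda$ for the small parameter, so that one generation is a time increment $\Delta s = \varepsilon$, and use the substitution $\kappa = 2m\varepsilon/\sigma^2$, i.e.\ $m = \sigma^2\kappa/(2\varepsilon)$.

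First I would compute the conditional moments of the increment $\Delta K := K(s+\Delta s) - K(s)$ given $K(s) = \kappa$. Conditionally on $M(t) = m$, the quantity $M(t+1) = \sum_{\alpha=1}^{M(t)} S_\alpha$ is a sum of $m$ i.i.d.\ copies of $S$, so the $j$-th cumulant of $\Delta M := M(t+1) - m$ is $m$ times a finite constant; in particular $\mathbb{E}(\Delta M\mid m) = m(\lambda-1)$ and $\Var(\Delta M\mid m) = m\sigma^2$. Using $\Delta K = (2\varepsilon/\sigma^2)\,\Delta M$, $m = \sigma^2\kappa/(2\varepsilon)$ and $\lambda - 1 = \varepsilon + O(\varepsilon^2)$ one gets
\[
\mathbb{E}(\Delta K\mid\kappa) = \kappa\,\varepsilon + O(\varepsilon^2),\qquad \mathbb{E}\big((\Delta K)^2\mid\kappa\big) = \Var(\Delta K\mid\kappa) + \mathbb{E}(\Delta K\mid\kappa)^2 = 2\kappa\,\varepsilon + O(\varepsilon^2),
\]
while for every $n\ge 3$ one finds $\mathbb{E}\big((\Delta K)^n\mid\kappa\big) = O(\varepsilon^2)$: the $n$-th central moment of a sum of $m$ i.i.d.\ terms grows like $m^{\lfloor n/2\rfloor}$, which after multiplication by $(2\varepsilon/\sigma^2)^n$ and substitution of $m \propto \varepsilon^{-1}$ scales as $\varepsilon^{\lceil n/2\rceil} = O(\varepsilon^2)$, and the lower-order contributions to the raw moment (coming from powers of the $O(\varepsilon)$ mean) are $O(\varepsilon^2)$ as well. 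Hence $\mathbb{E}(\Delta K\mid\kappa)/\Delta s\to\kappa$, $\mathbb{E}((\Delta K)^2\mid\kappa)/\Delta s\to 2\kappa$ and all higher ratios vanish as $\varepsilon\to 0$.

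Next I would write the Chapman--Kolmogorov equation for the chain in the rescaled variables, treat the one-step transition as a jump kernel, and Taylor-expand in the jump size; this yields the Kramers--Moyal series
\[
f_K(\kappa; s+\Delta s) - f_K(\kappa; s) = \sum_{n\ge 1}\frac{(-1)^n}{n!}\,\frac{\partial^n}{\partial\kappa^n}\Big[\,\mathbb{E}\big((\Delta K)^n\mid K=\kappa\big)\,f_K(\kappa; s)\,\Big].
\]
Dividing by $\Delta s = \varepsilon$, letting $\varepsilon\to 0$ and inserting the moment limits collapses the sum to its first two terms,
\[
\frac{\partial f_K}{\partial s} = -\frac{\partial}{\partial\kappa}\big(\kappa f_K\big) + \frac12\,\frac{\partial^2}{\partial\kappa^2}\big(2\kappa f_K\big) = -\frac{\partial(\kappa f_K)}{\partial\kappa} + \frac{\partial^2(\kappa f_K)}{\partial\kappa^2},
\]
which is (\ref{fwdKolmog}). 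Equivalently and more cleanly, the same moment computation gives, for $g\in C_c^2(0,\infty)$, $\Delta s^{-1}\,\mathbb{E}[\,g(K(s+\Delta s)) - g(\kappa)\mid K(s)=\kappa\,]\to \kappa g'(\kappa) + \kappa g''(\kappa) =: (\mathcal{A}g)(\kappa)$, identifying the limit generator $\mathcal{A}$, and (\ref{fwdKolmog}) is $\partial_s f_K = \mathcal{A}^\ast f_K$ in the weak sense via the integration by parts $\int(\kappa g' + \kappa g'')f\,d\kappa = \int g\,[-\partial_\kappa(\kappa f) + \partial_\kappa^2(\kappa f)]\,d\kappa$.

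I expect the main obstacle to be the uniform control of the tail of the Kramers--Moyal series (equivalently, rigorously truncating at second order in the spirit of Pawula's theorem): one must verify that the $O(\varepsilon^2)$ bounds on the higher conditional moments are uniform in $\kappa$ over the relevant range and can be summed against the derivatives of $f_K$, which is exactly where the hypothesis that $S$ has finite moments to all orders enters. A secondary subtlety is the boundary point $\kappa = 0$: the discrete chain has $m=0$ as an absorbing (extinction) state, so strictly (\ref{fwdKolmog}) holds on $(0,\infty)$ and the mass accumulating at $0$ must be accounted for separately; the statement as phrased, being about the density $f_K(\kappa;s)$, sidesteps this, and the boundary terms in the integration by parts vanish for the admissible test functions. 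For a fully rigorous treatment one can instead invoke the standard convergence of rescaled BGW chains to the Feller diffusion (Lamperti; \cite{feller1951diffusion}), the generator computation above being precisely the identification step of the associated martingale problem.
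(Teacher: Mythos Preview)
Your proposal is correct and follows essentially the same approach as the paper: compute the first two conditional moments of the one-step increment $\Delta K$ (obtaining $\kappa\,\Delta s$ and $2\kappa\,\Delta s$), verify that higher moments are $o(\Delta s)$, and plug into the standard Kramers--Moyal/Fokker--Planck template for the forward equation. Your treatment is somewhat more thorough---you give an explicit scaling argument for the higher moments and flag the boundary and uniformity issues---but the paper's proof is exactly this moment computation, citing the formal procedure in \cite{Ewens:2004kx} in lieu of your Kramers--Moyal expansion.
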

\begin{proof}
The derivation follows the formal method given in \citet[][Chapter~4]{Ewens:2004kx} for obtaining a forward-Kolmogorov equation.  
Equating an increment of continuum to one time step in the discrete model, set  
\begin{equation}	\label{sAndKDef}
\begin{split}
\delta s &= \log\lambda \\
\delta K(s) & = \frac{2\log\lambda}{\sigma^2} (M(t + 1) - M(t)).  
\end{split}
\end{equation}
In general, if 
\begin{equation}	 \label{EdeltaKGeneralCase}
\begin{split}
\mathbb{E}(\delta K(s) \mid K(s) = \kappa) & =  a(\kappa) \delta s + o(\delta s), \\
\mathbb{E}(\delta K(s)^2 \mid K(s) = \kappa) & =  b(\kappa) \delta s + o(\delta s), \\
\mathbb{E}(\delta K(s)^k \mid K(s) = \kappa) & =  o(\delta s), \quad k\ge3,  
\end{split}
\end{equation}
for finite functions $a(\kappa)$ and $b(\kappa)$ as $\delta s \rightarrow 0$, then the forward Kolmogorov equation takes the form
\begin{equation}
 \frac{\partial f_K(\kappa; s)}{\partial s} = - \frac{\partial}{\partial \kappa} \left(a(\kappa) f_K(\kappa; s)\right) 
 									+ \frac{1}{2} \frac{\partial^2}{\partial \kappa^2} \left(b(\kappa) f_K(\kappa; s)\right). 
 		\label{generalForwardKolmog}
\end{equation}
From Eqs.~(\ref{lambdaSigmaDef}), (\ref{eq:MAtTPlus1}) and (\ref{sAndKDef}), and setting $\kappa =  2m\log\lambda/\sigma^2$ one obtains 
\begin{eqnarray}
\mathbb{E}(\delta K(s) \mid K(s) = \kappa) & =  & \frac{2\log\lambda}{\sigma^2}\left(\mathbb{E}(M(t + 1)\mid M(t)=m) - m \right) \nonumber \\
	& = & \frac{2m\log\lambda}{\sigma^2} (\lambda - 1) \nonumber \\
	& = & \kappa \delta s + {\mathcal O}(\delta s^2), 
\end{eqnarray}
and 
\begin{eqnarray}
\mathbb{E}(\delta K(s)^2 \mid K(s) = \kappa) & =  &  \Var(\delta K(s) \mid K(s) = \kappa) + \mathbb{E}(\delta K(s) \mid K(s) = \kappa)^2 \nonumber \\
	& = & \left(\frac{2\log\lambda}{\sigma^2}\right)^2 \Var\left(M(t + 1)\mid M(t)=m\right) + {\mathcal O}(\delta s^2) \nonumber \\
	& = & 2\kappa \log\lambda  + {\mathcal O}(\delta s^2) \nonumber \\
	& = & 2\kappa \delta s  + {\mathcal O}(\delta s^2).  
\end{eqnarray}
Thus $a(\kappa) = \kappa$, $b(\kappa) = 2\kappa$, and Eq.~(\ref{fwdKolmog}) follows.  
\end{proof}
%
Note that the continuum scaling defined by Eqs.~(\ref{kappa0Def}) and (\ref{sAndKDefns}) is deliberately chosen so that all parameters defining the 
process are subsumed into a single parameter $\kappa_0$ which is present in the diffusion limit only via the initial conditions, and that the differential 
equation itself is parameter-free.  
However, while appropriate for current problem of exploring coalescence in the near-supercritical case $\lambda > 1$, the scaling is not appropriate  
appropriate for the near-subcritical case $\lambda < 1$ as the subsequent analysis assumes $\kappa_0 > 0$ and that the density $f_K(\kappa)$ 
is non-zero only for $\kappa \ge 0$.  

The solution to the forward Kolmogorov equation for a BGW process was first solved, albeit with a different continuum scaling, 
by \citet{feller1951diffusion, feller1951two} by means of a Laplace transform.  Summaries of Feller's method of solution can be found in \citet[][p236]{Cox78} 
and \citet{burden2018mutation}.  For completeness, the following lemma derives the solution for a supercritical Feller diffusion using the continuum scaling 
and notation of the current paper.  

%
\begin{lemma}
The solution solution to the forward Kolmogorov equation, Eq.~(\ref{fwdKolmog}), corresponding to an initial scaled population $K(0) = \kappa_0 > 0$ is
\begin{eqnarray}	\label{KDensity}
\lefteqn{f_K(\kappa; s | \kappa_0)  =   \delta(\kappa) \exp\left( - \frac{\kappa_0}{1 - e^{-s}}\right) + }\nonumber \\
	&  & \frac{1}{e^s - 1} \left(\frac{\kappa_0 e^s}{\kappa} \right)^\frac{1}{2} \exp \left\{ - \frac{\kappa_0 e^s + \kappa}{e^s - 1} \right\} 
							I_1\left(\frac{2(\kappa \kappa_0 e^s)^{\frac{1}{2}}}{e^s -1} \right), \qquad 0 \le \kappa < \infty, \nonumber \\
\end{eqnarray}
where $\delta(\cdot)$ is the Dirac delta function and $I_1(\cdot)$ is a modified Bessel function of the first kind.  In particular, 
\begin{equation}	\label{fKInitialCondition}
f_{K}(\kappa; 0 | \kappa_0) = \delta(\kappa - \kappa_0).
\end{equation}
\end{lemma}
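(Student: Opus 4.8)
The plan is to use the classical transform method of \citet{feller1951diffusion}. First I would introduce the Laplace transform
\[
\phi(\theta; s) = \mathbb{E}\bigl(e^{-\theta K(s)}\bigr) = \int_0^\infty e^{-\theta\kappa}\, f_K(\kappa; s \mid \kappa_0)\, d\kappa, \qquad \theta \ge 0,
\]
and convert the second-order equation~(\ref{fwdKolmog}) into a first-order equation for $\phi$. Multiplying (\ref{fwdKolmog}) by $e^{-\theta\kappa}$, integrating over $\kappa\in[0,\infty)$, and integrating by parts once in the drift term and twice in the diffusion term, the contributions at $\kappa=\infty$ vanish by decay of $f_K$, while the boundary contributions at $\kappa=0$ --- which encode the exit/absorbing behaviour of the supercritical Feller diffusion and the probability mass accumulating there --- combine to leave no net term. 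Using $\int_0^\infty \kappa\, e^{-\theta\kappa} f_K\, d\kappa = -\partial\phi/\partial\theta$ one is left with the linear transport equation
\[
\frac{\partial\phi}{\partial s} = \theta(1-\theta)\frac{\partial\phi}{\partial\theta}.
\]

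Next I would solve this by the method of characteristics. Along curves $d\theta/ds = -\theta(1-\theta)$ the transform $\phi$ is constant; separating variables and using the partial-fraction decomposition $[\theta(1-\theta)]^{-1} = \theta^{-1} + (1-\theta)^{-1}$ identifies $\theta e^{s}/(1-\theta)$ as the conserved quantity, so $\phi(\theta; s) = g\bigl(\theta e^{s}/(1-\theta)\bigr)$ for some function $g$. Since $K(0)=\kappa_0$ is deterministic, $\phi(\theta;0) = e^{-\theta\kappa_0}$, which forces $g(w) = \exp\bigl(-\kappa_0 w/(1+w)\bigr)$ and hence
\[
\phi(\theta; s) = \exp\!\left(-\frac{\kappa_0\,\theta e^{s}}{1+\theta(e^{s}-1)}\right).
\]
The claim~(\ref{fKInitialCondition}) then follows at once, since at $s=0$ this reads $\phi(\theta;0)=e^{-\theta\kappa_0}$, which is the Laplace transform of $\delta(\kappa-\kappa_0)$, and the Laplace transform is injective.

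It remains to invert $\phi$. Writing $c = e^{s}-1$ and $A = \kappa_0 e^{s}/(e^{s}-1)$, the identity $\frac{(1+c)\theta}{1+c\theta} = \frac{1+c}{c}\bigl(1 - \frac{1}{1+c\theta}\bigr)$ recasts $\phi$ as $\phi(\theta;s) = e^{-A}\exp\bigl(A/(1+c\theta)\bigr)$. Expanding the exponential as $\sum_{n\ge 0}A^n/\bigl(n!\,(1+c\theta)^n\bigr)$, the $n=0$ term inverts to $e^{-A}\delta(\kappa)$, and since $e^{-A} = \exp\bigl(-\kappa_0/(1-e^{-s})\bigr)$ this is exactly the atom in (\ref{KDensity}). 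For $n\ge 1$ I would use $\mathcal{L}^{-1}\bigl[(\theta+1/c)^{-n}\bigr](\kappa) = \kappa^{n-1}e^{-\kappa/c}/(n-1)!$, so that the continuous part equals
\[
e^{-A}\,e^{-\kappa/c}\sum_{n\ge 1}\frac{A^n\,\kappa^{n-1}}{n!\,(n-1)!\,c^{n}},
\]
and resum this using the series $I_1(z) = \sum_{n\ge 1}(z/2)^{2n-1}/\bigl(n!\,(n-1)!\bigr)$ with $z = 2\sqrt{A\kappa/c}$; reinserting $c = e^{s}-1$ and $A = \kappa_0 e^{s}/(e^{s}-1)$ reproduces the second term of (\ref{KDensity}). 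Term-by-term inversion is legitimate because the series converges absolutely for $s>0$ and $\kappa\ge 0$.

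I expect the main obstacle to be the honest treatment of the boundary at $\kappa=0$: because the supercritical diffusion has an exit boundary there, $f_K$ carries an atom $\delta(\kappa)\,\Prob(K(s)=0)$, and the boundary terms produced by the double integration by parts are cancelled only once the distributional contribution of that atom is correctly accounted for; getting this bookkeeping right --- equivalently, checking that $\lim_{\theta\to\infty}\phi(\theta;s) = e^{-A}$ really is the mass of the atom, which it is --- is the delicate point. A clean way to sidestep it in the write-up is to argue a posteriori: verify that the $f_K$ in (\ref{KDensity}) has Laplace transform the $\phi$ above, that this $\phi$ solves the transport equation with $\phi(\theta;0)=e^{-\theta\kappa_0}$, and then invoke uniqueness of the initial-value problem together with injectivity of the Laplace transform. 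The remaining ingredients --- characteristics, the inversion formula for $(\theta+a)^{-n}$, and recognising the Bessel series --- are routine.
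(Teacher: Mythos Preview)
Your proposal is correct and follows essentially the same route as the paper: Laplace-transform the forward Kolmogorov equation, solve the resulting first-order PDE by characteristics to obtain $\phi(\theta;s)=\exp\bigl(-\kappa_0\theta e^{s}/(1+\theta(e^{s}-1))\bigr)$, and then invert. The only notable difference is that the paper outsources the inversion step to a formula quoted from \citet{Cox78} for $\mathcal{L}^{-1}\bigl[\exp(-A\theta/(1+B\theta))\bigr]$, whereas you carry it out explicitly via the series expansion of $\exp\bigl(A/(1+c\theta)\bigr)$ and term-by-term inversion of $(\theta+1/c)^{-n}$, then resum with the $I_1$ series; your discussion of the boundary atom at $\kappa=0$ is also more careful than the paper's.
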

\begin{proof}
Consider Laplace transform 
\begin{equation}	\label{LTofFKeqn}
\phi(\theta, s) = \int_0^\infty e^{-\theta \kappa} f_{K}(\kappa; s | \kappa_0) d\kappa. 
\end{equation}
Applying the Laplace transform to both sides of Eq.~(\ref{fwdKolmog}) and carrying through standard manipulations~\citep[][pp218, 219]{Cox78} gives 
\begin{equation}
\frac{\partial\phi(\theta, s)}{\partial s} + \theta(\theta - 1) \frac{\partial\phi(\theta, s)}{\partial \theta} = 0.  
\end{equation}
This is a first order linear partial differential equation for $\phi(\theta; s)$ whose initial boundary condition, corresponding to Eq.~(\ref{fKInitialCondition}), is
\begin{equation}	 \label{phiInitCond}
\phi(\theta, 0) = e^{-\theta \kappa_0}. 
\end{equation}
The solution is found by observing that $\phi(\theta; s)$ is constant along characteristic curves in the $(s,\theta)$-plane defined by 
\begin{equation}	 \label{charDefn}
0 = \frac{\partial\phi(\theta, s)}{\partial s} + \frac{d \theta}{ds} \frac{\partial\phi(\theta, s)}{\partial \theta}.
\end{equation}
Comparing Eqs.~(\ref{LTofFKeqn}) and (\ref{charDefn}) we see that the characteristic curves are solutions to the differential equation 
\begin{equation}
\frac{d\theta}{ds} = \theta(\theta - 1), 
\end{equation}
namely
\begin{equation} \label{thetaOfS}
\theta(s) = \frac{\theta(0)}{\theta(0) - (\theta(0) - 1)e^s}. 
\end{equation}
Thus, $\phi(s, \theta)$ can be found by tracing the characteristic curve back to the boundary point 
\begin{equation}
\theta(0) = \frac{\theta e^s}{1 + \theta(e^{s} - 1)}, 
\end{equation}
obtained by inverting Eq.~(\ref{thetaOfS}).  Combining this with the initial condition Eq.~(\ref{phiInitCond}) gives 
\begin{equation}	\label{phiSolution}
\phi(\theta, s) = \exp\left\{-\frac{\kappa_0\theta e^s}{1 + \theta(e^{s} - 1)}\right\}.
\end{equation}

It remains to invert the Laplace transform.   The inverse Laplace transform of 
\begin{equation}	\label{phiInTermsOfAB v}
\phi(\theta) = \exp\left\{-\frac{A \theta}{1 + B\theta}\right\}, 
\end{equation}
is derived in \citet[][p236, 250]{Cox78} as\footnote{The continuous part of the density stated in \citet[][p250]{Cox78} is missing a factor $\exp(-A/B)$.  
This appears to be an error.} 
\begin{equation}
f_K(\kappa) = \delta(\kappa)\exp\left(-\frac{A}{B}\right) + 
	\exp\left(-\frac{A + \kappa}{B}\right) \left(\frac{A}{\kappa B^2}\right)^\frac{1}{2} I_1\left(\frac{2(A\kappa)^\frac{1}{2}}{B}\right). 
\end{equation}
Setting $A = \kappa_0 e^s$ and $B = e^s - 1$ gives Eq.~(\ref{KDensity}).  
\end{proof}
%
\begin{remark}
In the first term of Eq.~(\ref{KDensity}) the coefficient of the delta-function is the probability of the population becoming extinct up to 
time $s$. See, for instance, \citet[][p206]{Bailey:1964fk}.  
\end{remark}
\begin{remark}
Eq.~(\ref{KDensity}) can be written in terms of a 1-parameter family of normalised density functions\footnote{Note that 
Eq.~(31) of \citet{burden2018mutation} defining the function $f_{\rm Feller}(\cdot)$ contains an error: the argument of the modified Bessel function 
should be $2\kappa_0 z^{\frac{1}{2}}$, not $2\kappa_0 z^{-\frac{1}{2}}$.
}  
\begin{equation}
f_{\rm Feller}(z; \xi) = \delta(z) e^{-\xi} + \xi z^{-\frac{1}{2}} e^{-\xi(1 + z)} I_1\left(2\xi z^\frac{1}{2} \right), 
\label{fFeller}
\end{equation}
where $z, \xi \ge 0$, as 
\begin{equation}
f_K(\kappa; s | \kappa_0)  =  \frac{1}{\kappa_0 e^s} f_{\rm Feller}\left(\frac{\kappa}{\kappa_0 e^s} ; \frac{\kappa_0}{1 - e^{-s}}\right).
\label{KDensityFromFFeller}
\end{equation}
\end{remark}
%
%
\section{Coalescence of a sample of $n$ individuals}
\label{sec:CoalescenceOfN}

Our aim is to estimate the time of the most recent common ancestor (MRCA) of a sample of $n$ individuals chosen randomly and uniformly from a 
population of scaled size $\kappa = (2m\log\lambda)/\sigma^2$, where $m$ is the population size at the ``current'' time $s = t \log\lambda$ since the 
coalescent event, which we set to be at time $0$.  
The population is assumed to have evolved via a BGW process with constant values of the parameters 
$\lambda$ and $\sigma^2$, which are assumed to be given.  Thus $\kappa$ is to be thought of as a known input parameter, while $s$ is an unknown
parameter which seek to estimate.  In the process, the scaled population size $\kappa_0 = 2m_0\log\lambda/\sigma^2$ at the time of the MRCA will also be estimated.  
The analysis follows the same reasoning as earlier work of \citet{burden2016genetic} on estimating the time of 
the MRCA of an entire population, known as mtE in the case of the existing human population.  
%
\begin{lemma}\label{lemma3}
Define ${\cal E}_n(\kappa_0, s)$ to be the event that a random uniform sample of $n$ individuals taken at time $s$ are descended from a 
single individual in the original population at time $0$, given that the initial scaled population size was $\kappa_0$.  Then the  
probability of the joint event that 
\begin{enumerate}
\item ${\cal E}_n(\kappa_0, s)$ happens; and 
\item the population size at time $s$ is in the range $[\kappa, \kappa + d\kappa)$  
\end{enumerate}
is 
\begin{eqnarray}	\label{ProbEnKGivenK0}		 
\lefteqn{\Prob({\cal E}_n(\kappa_0, s), K(s) \in [\kappa, \kappa + d\kappa) | K(0) = \kappa_0) } \nonumber \\
& \qquad\qquad= & \frac{\kappa_0 e^s}{(e^s - 1)^2} \frac{ n! I_n(2w)}{w^n} \exp\left(- \frac{\kappa_0 + \kappa e^{-s}}{1 - e^{-s}} \right) d\kappa.
\end{eqnarray}
where
\begin{equation}
w = \frac{(\kappa\kappa_0 e^s)^\frac{1}{2}}{e^s - 1} = \frac{(\kappa\kappa_0)^\frac{1}{2}}{2 \sinh \frac{1}{2}s}.
\label{wDef}
\end{equation}
\end{lemma}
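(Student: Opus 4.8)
The plan is to decompose the initial population of scaled mass $\kappa_0$ into $m_0$ founders, each of infinitesimal scaled mass $\epsilon=\kappa_0/m_0$, follow the $m_0$ descendant families separately, and pass to the diffusion limit $m_0\to\infty$; the final answer then drops out of a Laplace inversion (equivalently, an elementary Bessel identity).

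First I would use the additivity (branching) property of the Feller diffusion, visible in Eq.~(\ref{phiSolution}) where $\phi(\theta,s\mid\kappa_0)$ is exponential in $\kappa_0$, to write $K(s)=\sum_{j=1}^{m_0}A_j$, where the family sizes $A_j$ are i.i.d.\ with density $f_K(\,\cdot\,;s\mid\epsilon)$ from Lemma~2 and $R:=\sum_{j=2}^{m_0}A_j\sim f_K(\,\cdot\,;s\mid(m_0-1)\epsilon)$ is independent of $A_1$. Because the actual population sizes diverge in the diffusion limit, a uniform sample of $n$ individuals at time $s$ may be treated as drawn with replacement, so conditionally on the family sizes the event ${\cal E}_n$ (all $n$ sampled individuals lying in one family) has probability $\sum_j(A_j/K(s))^n$, equal to $0$ when $K(s)=0$; in particular the atom of $f_K$ at the origin never contributes for $n\ge1$. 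Taking expectations and using exchangeability of the $A_j$,
\[
\Prob\bigl({\cal E}_n(\kappa_0,s),\,K(s)\in[\kappa,\kappa+d\kappa)\mid K(0)=\kappa_0\bigr)=\frac{m_0}{\kappa^n}\,\mathbb{E}\!\left[A_1^{\,n}\,\mathbf{1}\{A_1+R\in[\kappa,\kappa+d\kappa)\}\right].
\]

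Next I would let $m_0\to\infty$. Using $I_1(x)=x/2+O(x^3)$ in Lemma~2, the continuous part of $f_K(\,\cdot\,;s\mid\epsilon)$ obeys $m_0\,f_K(a;s\mid\epsilon)\to\dfrac{\kappa_0 e^s}{(e^s-1)^2}\,e^{-a/(e^s-1)}$ for $a>0$, while $(m_0-1)\epsilon\to\kappa_0$ gives $R\Rightarrow f_K(\,\cdot\,;s\mid\kappa_0)$. Hence, writing the right-hand side above as $\kappa^{-n}g_n(\kappa)\,d\kappa$, the function $g_n$ is the convolution $g_n(\kappa)=\dfrac{\kappa_0 e^s}{(e^s-1)^2}\int_0^\kappa a^n e^{-a/(e^s-1)}f_K(\kappa-a;s\mid\kappa_0)\,da$, whose Laplace transform in $\kappa$ is
\[
\widetilde{g_n}(\theta)=\Bigl[(-1)^n\partial_\theta^n\phi(\theta,s\mid\epsilon)\Bigr]\,m_0\,\phi\bigl(\theta,s\mid(m_0-1)\epsilon\bigr)\;\longrightarrow\;\frac{\kappa_0\,n!\,(e^s-1)^{n-1}e^s}{\bigl(1+\theta(e^s-1)\bigr)^{n+1}}\,\exp\!\left\{-\frac{\kappa_0\theta e^s}{1+\theta(e^s-1)}\right\},
\]
the limit using $\phi(\theta,s\mid\epsilon)=e^{-\epsilon\psi(\theta)}$ with $\psi(\theta)=\theta e^s/(1+\theta(e^s-1))$, so only the term linear in $\epsilon$ survives after multiplication by $m_0$. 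Inverting term by term via $\exp\{-A\theta/(1+B\theta)\}=e^{-A/B}\sum_{k\ge0}(A/B)^k k!^{-1}(1+B\theta)^{-k}$ and $\mathcal{L}^{-1}\bigl[(1+B\theta)^{-m}\bigr](\kappa)=\kappa^{m-1}e^{-\kappa/B}/\bigl((m-1)!\,B^m\bigr)$, with $A=\kappa_0 e^s$ and $B=e^s-1$, produces $\sum_k w^{2k}/\bigl(k!(k+n)!\bigr)=I_n(2w)/w^n$ with $w$ as in Eq.~(\ref{wDef}), and hence Eq.~(\ref{ProbEnKGivenK0}) after noting that the exponent $(\kappa_0 e^s+\kappa)/(e^s-1)$ equals $(\kappa_0+\kappa e^{-s})/(1-e^{-s})$. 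Equivalently, one can split $f_K(\kappa-a;s\mid\kappa_0)$ into its atom and its continuous part and substitute $\kappa-a=\kappa x^2$, reducing the whole computation to the elementary identity $1+2w\int_0^1(1-x^2)^n I_1(2wx)\,dx=n!\,I_n(2w)/w^n$, proved by expanding $I_1$ in its power series and using $\int_0^1(1-x^2)^n x^{2k+1}\,dx=\tfrac12 B(k+1,n+1)$.

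The main obstacle is the second step, not the algebra: one must justify that in the diffusion limit each founder's line is an independent Feller diffusion started from infinitesimal mass, that uniform sampling becomes sampling with replacement, and that $m_0\,\mathbb{E}\bigl[A_1^{\,n}\mathbf{1}\{A_1+R\in d\kappa\}\bigr]$ converges to the stated integral — in particular that the atom of $f_K(\,\cdot\,;s\mid\epsilon)$ at $0$ drops out and the continuous part contributes exactly as written. This generalises the corresponding argument for the whole population in \citet{burden2016genetic}, which is recovered here in the limit $n\to\infty$ (where $n!\,I_n(2w)/w^n\to1$). Once this limiting representation is secured, the rest — the Laplace inversion, or equivalently the Bessel identity above — is routine.
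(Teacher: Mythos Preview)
Your proposal is correct and follows essentially the same approach as the paper: both tag a single founder of infinitesimal scaled mass $\kappa_0/m_0$, write the joint probability as $m_0$ times the chance that all $n$ sampled individuals lie in that founder's family while the total population hits $[\kappa,\kappa+d\kappa)$, expand to leading order in $1/m_0$, and pass to the limit $m_0\to\infty$. The only cosmetic difference is that your primary route to the final formula is the Laplace inversion, whereas the paper works directly with the integral $\int_0^1 x^n(1-x)^{-1/2} I_1\bigl(2w(1-x)^{1/2}\bigr)\,dx$, substitutes $1-x=z^2$, and invokes the identity $2\int_0^1(1-z^2)^n I_1(2wz)\,dz = w^{-1}\bigl(n!\,I_n(2w)/w^n - 1\bigr)$ --- exactly the alternative you sketch at the end.
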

\begin{proof}
Consider the population to be divided into two types with scaled population sizes $K_1(s)$ and $K_2(s)$, such that each type evolves as an independent BGW process
with initial conditions 
\begin{equation}
K_1(0) = x_0\kappa_0, \quad K_2(0) = (1 - x_0)\kappa_0, \quad 0 \le x_0 \le 1. 
\end{equation}
Introduce two random variables 
\begin{equation}
K(s) = K_1(s) + K_2(s), \qquad X(s) = \frac{K_1(s)}{K_1(s) + K_2(s)}.  
\label{KXDefn}
\end{equation}
Here $K(s)$ is the total population size and $X(s)$ is the fraction of the population which is of the first type at time $s$.  
Since $K_1(s)$ and $K_2(s)$ are independent, the joint density corresponding to the initial conditions $K(0) = \kappa_0$, $X(0) = x_0$ is 
\begin{equation}	\label{KXDensity}
f_{K,X}(\kappa, x; s | \kappa_0, x_0) = \kappa f_{K}(x \kappa; s | x_0 \kappa_0) f_{K}((1 - x) \kappa; s | (1 - x_0) \kappa_0),  
\end{equation}
defined over the range $0 \le \kappa < \infty$, $0 \le x \le 1$, $0 \le s < \infty$.  
The function $f_K(\cdot)$ is defined by Eq.~(\ref{KDensity}) and the factor of $\kappa$ arises from the Jacobian of the transformation Eq.~(\ref{KXDefn}).  

In the event ${\cal E}_n(\kappa_0, s)$, define descendants of the common ancestor of the sample of size $n$ to be of type-1, and members of 
the population who are not descended from this individual to be of type-2.  Setting $x_0 = 1/m_0$ in the density $f_{K,X}$ we have 
\begin{eqnarray}
\lefteqn{\Prob({\cal E}_n(\kappa_0, s), K(s) \in [\kappa, \kappa + d\kappa) | K(0) = \kappa_0) } \nonumber \\
	&& \qquad\qquad\qquad\qquad  = m_0 \int_0^1 x^n f_{K,X}\left(\kappa, x; s | \kappa_0, \frac{1}{m_0}\right) dx \, d\kappa. 
\label{ProbEnAndK}
\end{eqnarray}
The initial factor $m_0$ accounts for the $m_0$ possibilities for the common ancestor of the sample, the integral ranges over all possible fractions $x$ 
of the final population that can be descended from that ancestor, and the factor $x^n$ is the probability that all $n$ individuals in the sample are of type-1, 
given that $X(s) = x$.  

From Eqs.~(\ref{KDensity}) and (\ref{KXDensity}), and using the identity $\kappa \delta(\kappa x) = \delta(x)$ and the property of the modified Bessel 
function that $I_1(2w) = w + O(w^2)$ as $w \rightarrow 0$, we obtain 
\begin{eqnarray}
\lefteqn{f_{K,X}\left(\kappa, x; s | \kappa_0, \frac{1}{m_0}\right) = } \nonumber \\
& &  \left\{ \delta(x) \left(1 - \frac{\kappa_0}{m_0(1 - e^{-s})}\right) + \frac{1}{m_0} \frac{\kappa \kappa_0 e^s}{(e^s - 1)^2} \exp\left(\frac{-x \kappa}{e^s - 1} \right) + O\left( \frac{1}{m_0^2}\right) \right\} \times \nonumber \\
& & \quad \left\{ \frac{1}{\kappa} \delta(1 - x) \exp\left(\frac{-\kappa_0}{1 - e^{-s}}\right) + \right. \nonumber \\
& & \qquad 	\frac{1}{e^s - 1} \left( \frac{\kappa_0 e^s}{(1 - x)\kappa} \right)^\frac{1}{2} \exp \left( - \frac{\kappa_0 e^s + (1 - x)\kappa}{e^s - 1} \right)  \times \nonumber \\
& & \qquad  \quad		\left. I_1\left(\frac{2((1 - x)\kappa\kappa_0 e^s)^\frac{1}{2}}{e^s - 1} \right) + O\left( \frac{1}{m_0}\right) \right\}.
\label{fKXAt1OverM0}   
\end{eqnarray}

Substituting back into Eq.~(\ref{ProbEnAndK}), noting that the $\delta(x)$ term in the first factor of Eq.~(\ref{fKXAt1OverM0}) does not contribute because of the 
factor $x^n$ in the integrand, integrating out the $\delta(1 - x)$ in the second factor, and taking the limit $m_0 \rightarrow \infty$ gives, after some straightforward algebra, 
\begin{eqnarray}
\lefteqn{\Prob({\cal E}_n(\kappa_0, s), K(s) \in [\kappa, \kappa + d\kappa) | K(0) = \kappa_0) } \nonumber \\
& = & \frac{\kappa_0 e^s}{(e^s - 1)^2} \exp\left(- \frac{\kappa_0 + \kappa e^{-s}}{1 - e^{-s}} \right) 
		 \left( 1 + w \int_0^1 \frac{x^n}{(1 - x)^\frac{1}{2}} I_1\left(2w(1 - x)^\frac{1}{2} \right) dx \right) d\kappa,   \nonumber \\ 
\end{eqnarray}
where $w$ is defined by Eq.~(\ref{wDef}).  
The integral can be evaluated with the aid of Wolfram alpha~\citep{WolframAlpha} as
\begin{eqnarray}
\int_0^1 \frac{x^n}{(1 - x)^\frac{1}{2}} I_1\left(2w(1 - x)^\frac{1}{2} \right) dx & = & 
	2 \int_0^1 (1 - z^2)^n I_1(2wz) dz \nonumber \\
& = & \frac{1}{w} \left( _0F_1(;n + 1; w^2) - 1 \right) \nonumber \\     
& = & \frac{1}{w} \left( \frac{ n! I_n(2w)}{w^n} - 1 \right),     
\end{eqnarray}
from which the required result, Eq.~(\ref{ProbEnKGivenK0}), follows.  
\end{proof}
%
\begin{theorem}\label{theorem1}
Define $u_n(s, \kappa_0 | \kappa)$ to be the probability that the initial population of a BGW process contains an individual who is the single common 
ancestor of a random uniform sample of $n$ individuals taken at a later time $s$, given the initial and final population sizes are $\kappa_0$ and 
$\kappa$ respectively.  Then   
\begin{eqnarray}	\label{u_nResult}
u_n(s, \kappa_0 | \kappa) & := & \Prob\left({\cal E}_n(\kappa_0, s) | K(0) = \kappa_0, K(s) = \kappa \right) \nonumber \\
	& = & \frac{n!}{w^{n - 1}} \frac{I_n(2w)}{I_1(2w)}.  
\end{eqnarray}
where $w$ is defined by Eq.~(\ref{wDef}).
\end{theorem}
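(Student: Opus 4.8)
The plan is to read off $u_n(s,\kappa_0\mid\kappa)$ as a ratio of two quantities already in hand: the joint probability of Lemma~\ref{lemma3} and the marginal density of $K(s)$ from Eq.~(\ref{KDensity}). By the definition of conditional probability,
\begin{equation}
u_n(s, \kappa_0 | \kappa) = \frac{\Prob({\cal E}_n(\kappa_0, s), K(s) \in [\kappa, \kappa + d\kappa) | K(0) = \kappa_0)}{f_K(\kappa; s | \kappa_0)\, d\kappa},
\end{equation}
and since $\kappa>0$ only the continuous (non-delta) part of Eq.~(\ref{KDensity}) enters the denominator.

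First I would rewrite the denominator in terms of the single variable $w$ of Eq.~(\ref{wDef}). Since $2(\kappa\kappa_0 e^s)^{1/2}/(e^s-1) = 2w$ and $w^2 = \kappa\kappa_0 e^s/(e^s-1)^2$, the continuous part of Eq.~(\ref{KDensity}) becomes
\begin{equation}
f_K(\kappa; s | \kappa_0) = \frac{w}{\kappa}\,\exp\left\{-\frac{\kappa_0 e^s + \kappa}{e^s - 1}\right\} I_1(2w), \qquad \kappa>0 .
\end{equation}
Then I would observe that the exponential in Eq.~(\ref{ProbEnKGivenK0}) is the same one: multiplying numerator and denominator of $(\kappa_0 + \kappa e^{-s})/(1 - e^{-s})$ by $e^s$ turns it into $(\kappa_0 e^s + \kappa)/(e^s-1)$, so the two exponential factors cancel in the ratio.

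What remains is purely algebraic:
\begin{equation}
u_n(s, \kappa_0 | \kappa) = \frac{\kappa_0 e^s}{(e^s-1)^2}\,\frac{n!\, I_n(2w)}{w^n}\cdot\frac{\kappa}{w\, I_1(2w)} = w\,\frac{n!\, I_n(2w)}{w^n\, I_1(2w)} = \frac{n!}{w^{n-1}}\,\frac{I_n(2w)}{I_1(2w)},
\end{equation}
where the middle step uses $\kappa\kappa_0 e^s/(e^s-1)^2 = w^2$; this is exactly Eq.~(\ref{u_nResult}). There is no genuine obstacle here beyond the bookkeeping — the two points demanding a little care are rewriting the exponential arguments so that they visibly coincide and expressing every prefactor consistently through $w$ — since the substantive work was already done in establishing Lemma~\ref{lemma3} and Eq.~(\ref{KDensity}). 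Once those are granted the theorem is immediate.
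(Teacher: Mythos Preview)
Your proof is correct and follows essentially the same approach as the paper: form the conditional probability as the ratio of the joint probability from Lemma~\ref{lemma3} to the continuous part of the density in Eq.~(\ref{KDensity}), then simplify. You have merely made the algebraic cancellation more explicit than the paper does, which is helpful but not a different route.
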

\begin{proof}
By definition, 
\begin{equation}	\label{uByDefn}
u_n(s, \kappa_0 | \kappa)  = \frac{\Prob({\cal E}_n(\kappa_0, s), K(s) \in [\kappa, \kappa + d\kappa) | K(0) = \kappa_0)}
  		{\Prob \left( K(s) \in [\kappa, \kappa + d\kappa) | K(0) = \kappa_0 \right)}.  
\end{equation}
The numerator follows from Lemma~\ref{lemma3}.  The denominator, obtained from the continuous part of Eq.~(\ref{KDensity}), is
\begin{eqnarray}	\label{ProbKEqualsKappa}
\lefteqn{\Prob \left( K(s) \in [\kappa, \kappa + d\kappa) | K(0) = \kappa_0 \right)} \nonumber \\
& = & \frac{1}{e^s - 1} \left(\frac{\kappa_0 e^s}{\kappa} \right)^\frac{1}{2} \exp \left\{ - \frac{\kappa_0 + \kappa e^{-s}}{1 - e^{-s}} \right\} 
							I_1(2w) d\kappa.  
\end{eqnarray}
Substituting Eqs.(\ref{ProbEnKGivenK0}) and (\ref{ProbKEqualsKappa}) into Eq.~(\ref{uByDefn}) yields Eq.~(\ref{u_nResult}).  
\end{proof}
%
\begin{corollary}\citep{burden2016genetic}
The probability that the entire population at time $s$ in a BGW process is descended from a single individual in the initial population at time 0, 
given the initial and final population sizes are $\kappa_0$ and $\kappa$ respectively, is  
\begin{equation}	\label{u_InfResult}
u_\infty(s, \kappa_0 | \kappa) = \frac{w}{I_1(2w)}.  
\end{equation}
\end{corollary}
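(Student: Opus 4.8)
\emph{Proof proposal.} The plan is to obtain Eq.~(\ref{u_InfResult}) directly from Theorem~\ref{theorem1} by letting the sample size $n$ tend to infinity, the two ingredients being (i)~a probabilistic identification of $u_\infty$ as $\lim_{n\to\infty} u_n$ and (ii)~the large-order behaviour of the modified Bessel function $I_n$. First I would couple the samples of all sizes by drawing the individuals one at a time (equivalently, by fixing a uniformly random ordering of the population present at time $s$) and letting ${\cal E}_n(\kappa_0, s)$ refer to the first $n$ of them. With this coupling the events are nested, ${\cal E}_{n+1}(\kappa_0, s) \subseteq {\cal E}_n(\kappa_0, s)$, since any sub-sample of a sample with a single founding ancestor in the initial population again has that single founder; and $\bigcap_{n\ge 1}{\cal E}_n(\kappa_0, s)$ is exactly the event that the entire population at time $s$ is descended from one individual in the initial population, whose conditional probability is $u_\infty(s,\kappa_0\mid\kappa)$. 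Continuity of probability from above, applied to the conditional law given $K(0)=\kappa_0$ and $K(s)=\kappa$, then gives
\[
u_\infty(s,\kappa_0\mid\kappa) \;=\; \lim_{n\to\infty} u_n(s,\kappa_0\mid\kappa) \;=\; \frac{1}{I_1(2w)}\,\lim_{n\to\infty}\frac{n!\,I_n(2w)}{w^{n-1}},
\]
where I have used Theorem~\ref{theorem1} and the fact that $w$, defined by Eq.~(\ref{wDef}), does not depend on $n$.

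It then remains to evaluate the Bessel limit. From the series $I_n(2w)=\sum_{k\ge 0} w^{n+2k}/\bigl(k!\,(n+k)!\bigr)$ one reads off $I_n(2w) = \tfrac{w^n}{n!}\bigl(1 + O(1/n)\bigr)$ for fixed $w\ge 0$, so that $n!\,I_n(2w)/w^{n-1}\to w$ as $n\to\infty$. Substituting this into the display above yields $u_\infty(s,\kappa_0\mid\kappa)=w/I_1(2w)$, which is Eq.~(\ref{u_InfResult}) and reproduces the formula of \citet{burden2016genetic}. (As a consistency check one may also verify $u_2 = 2 I_2(2w)/(w\,I_1(2w))$ from Theorem~\ref{theorem1}, matching O'Connell-type expressions.)

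I expect the only genuine subtlety to lie in the first step, namely justifying that the probability attached to the ``whole population'' coalescent event is the limit of the finite-sample probabilities. In the underlying discrete BGW model this is immediate from continuity from above for the decreasing sequence $\{{\cal E}_n\}$; in the diffusion limit it amounts to the statement that sampling the entire population is the increasing union of the sampling operations of finite size, and one could in any case sidestep it completely by quoting the direct derivation of Eq.~(\ref{u_InfResult}) already given in \citet{burden2016genetic}. The Bessel asymptotic in the second step is elementary and presents no difficulty.
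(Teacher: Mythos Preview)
Your proposal is correct and follows essentially the same route as the paper: both obtain Eq.~(\ref{u_InfResult}) by taking the $n\to\infty$ limit of the formula in Theorem~\ref{theorem1} using the ascending-series asymptotic $I_n(2w) = (w^n/n!)(1 + O(1/n))$. Your version is in fact more careful than the paper's, which simply takes the formal limit without justifying that $u_\infty = \lim_{n\to\infty} u_n$; your coupling argument via continuity from above for the nested events $\{{\cal E}_n\}$ fills that gap.
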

\begin{proof}
From the ascending series for modified Bessel function \citep[p375 of][]{Abramowitz:1965sf} we have 
$I_n(2w) = (w^n/n!)(1 + O(1/n))$ as $n \rightarrow \infty$.  Taking the limit of Eq.~(\ref{u_nResult}) leads to Eq.~(\ref{u_InfResult}).  
\end{proof}

The interpretation of Theorem~\ref{theorem1} is illustrated by the black contours in Figure~\ref{fig:coalescentContours}. 
Figures~\ref{fig:coalescentContours}(a) and  \ref{fig:coalescentContours}(b) show a contour map of the density in Eq.~(\ref{u_nResult}) in the $s$-$\kappa_0$ 
plane for a random sample of size $n = 2$ taken from an observed final scaled population of size $\kappa = 3$ and $\kappa = 2000$ respectively.  
The map grades from a probability near 0 on the left edge of the plot corresponding to recent initialisation times to a probability 
near 1 at earlier times.  If the BGW process were initiated at a time $s$ in the past with a population of size $\kappa_0$, the contour passing through the 
point $(s, \kappa_0)$ gives the probability that the ancestral coalescence of the sample occurred after the initialisation of the process.  
Note that for larger values of $\kappa$ the coalescent event is fixed within a relatively narrow range.  

\begin{figure}[t]
\begin{center}
\centerline{\includegraphics[width=\textwidth]{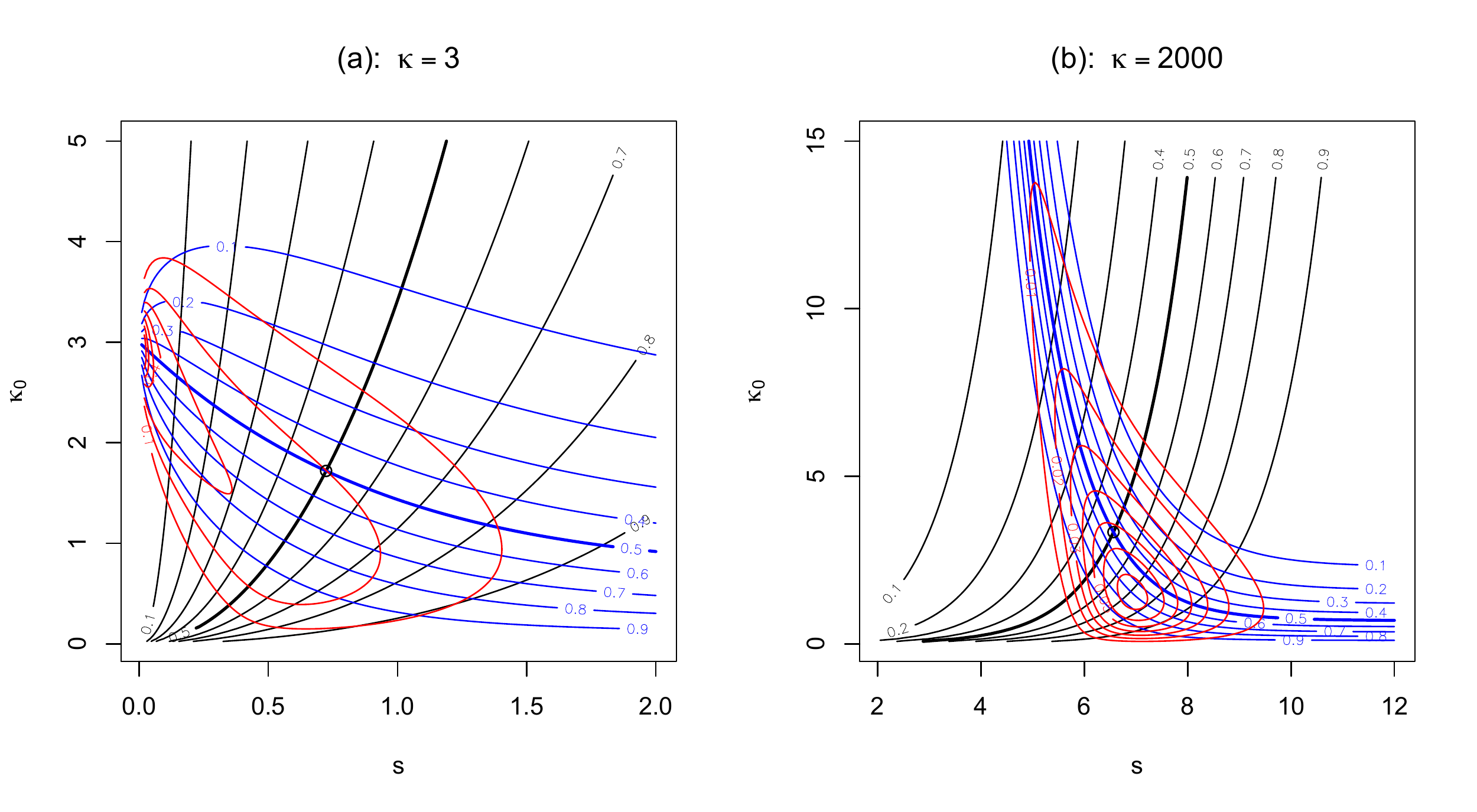}}
\caption{Contours (black lines) of the probability $u_n(s, \kappa_0 | \kappa)$ in Eq.~(\ref{u_nResult}) that a random sample of $n=2$ individuals 
in the ``current" population is descended from a single individual in the ``initial" population.  
Axes refer to the scaled time $s$ at which the current population is measured and the scaled initial population at time zero, $\kappa_0$.  
Superimposed in blue are contours of the likelihood function $v(s, \kappa_0 | \kappa)$ defined in  
Eq.~(\ref{vDefn}) as probability that the population at time $s$ will not exceed the observed population 
$\kappa$ for a given starting position in the plane.  
The red curves are the contours of the likelihood surface ${\cal L}(s, \kappa_0 ; \kappa)$ described in \S\ref{sec:LikelihoodSurface}. } 
\label{fig:coalescentContours}
\end{center}
\end{figure}

%
%
\section{Confidence interval for $\kappa_0$ given an observed $K(s)$}
\label{sec:ConfidenceInterval}

Superimposed in blue in Figure~\ref{fig:coalescentContours} are contours of the likelihood function that the population at time $s$ will not 
exceed the observed population $\kappa$ for a given starting population $\kappa_0$, 
\begin{eqnarray}	\label{vDefn}
v(s, \kappa_0 | \kappa) & := & \Prob(K(s) \le \kappa | K(0) = \kappa_0 ) \nonumber \\
	& = & \int_0^\kappa f_K(\eta; s |  \kappa_0) d\eta,  \label{vDefn}
\end{eqnarray}
where the function $f_K$ is defined by Eq.~(\ref{KDensity}).  
Note that the integrand includes the point mass at $\eta = 0$ which accounts for 
the possibility of extinction of the entire population.  The contour marked $0.1$ connects starting configurations for which there is a 10\% probability the 
population will undershoot the observed population at time $s$, and the contour marked $0.9$ connects starting configurations for which there is a 10\% 
chance of overshooting.  

The function $v(s, \kappa_0 | \kappa)$ enables us to establish confidence interval for estimating the initial population $K(0) = \kappa_0$ given that 
a population generated by a BGW process is observed to be of size $K(s)$ at a known time $s$.  This confidence interval is define in the following theorem.  
%
\begin{theorem} \label{theorem2}
Consider the inverse function $v^{-1}(p | s, \kappa)$ mapping the probability $p \in [0, 1]$ back to an initial condition $\kappa_0$ in 
the interval $[0, \infty)$, thus, 
\begin{eqnarray}
v^{-1}(v(s, \kappa_0 | \kappa) | s, \kappa) = \kappa_0.  
\label{vInvDefn}
\end{eqnarray}
Then if a BGW process initiated at time 0 with population $\kappa_0$ evolves to a population $K(s)$ at time $s$, 
for any $0 \le p_1 < p_2 \le 1$, the random interval 
\begin{equation}
{\cal I}_{\kappa_0}(p_1, p_2 | K(s)) := [v^{-1}(p_2 | s, K(s)), v^{-1}(p_1 | s, K(s))], 
\label{kappa0ConfInterval}
\end{equation}
has a probability $p_2 - p_1$ of containing the starting population $\kappa_0$. 
\end{theorem}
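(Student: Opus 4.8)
The plan is to reduce the statement to a completely standard fact about inverting a monotone CDF-like function, once one checks that $v(s,\kappa_0\mid\kappa)$, viewed as a function of $\kappa_0$, is a strictly monotone bijection from $[0,\infty)$ onto $[0,1]$ with the orientation reversed. First I would fix $s$ and $\kappa$ and study the map $\kappa_0 \mapsto v(s,\kappa_0\mid\kappa) = \int_0^\kappa f_K(\eta;s\mid\kappa_0)\,d\eta$. Using the closed form Eq.~(\ref{KDensity}), one sees that as $\kappa_0$ increases the population at time $s$ stochastically increases, so the probability of undershooting a fixed level $\kappa$ decreases; hence $v$ is strictly decreasing in $\kappa_0$. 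At the endpoints, $v(s,0\mid\kappa)=1$ (the process started from $\kappa_0=0$ stays at $0$, so $K(s)\le\kappa$ almost surely, the delta-mass at $\eta=0$ carrying all the weight), and $v(s,\kappa_0\mid\kappa)\to 0$ as $\kappa_0\to\infty$ (the population is eventually pushed above any fixed $\kappa$ with probability tending to $1$). Continuity of $v$ in $\kappa_0$ follows from dominated convergence applied to Eq.~(\ref{KDensity}). Together these give that $v(s,\cdot\mid\kappa)$ is a continuous strictly decreasing bijection $[0,\infty)\to(0,1]$, so the inverse $v^{-1}(\cdot\mid s,\kappa)$ in Eq.~(\ref{vInvDefn}) is well defined, continuous, and strictly decreasing on $(0,1]$ (extended to $p=0$ by $v^{-1}(0\mid s,\kappa)=\infty$), which in particular justifies writing the interval in Eq.~(\ref{kappa0ConfInterval}) with the endpoints in the stated order.

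Next I would turn to the probabilistic content. Condition on the true (but unknown) starting value $\kappa_0$, and write $P := v(s,\kappa_0\mid\kappa)$ evaluated at $\kappa = K(s)$; that is, consider the random variable $P = v(s,\kappa_0\mid K(s))$ obtained by plugging the random final population $K(s)$ into the second slot. The key claim is that $P$ is uniformly distributed on $[0,1]$. This is the probability-integral-transform step: for fixed $\kappa_0$ and $s$, the map $\kappa \mapsto v(s,\kappa_0\mid\kappa) = \Prob(K(s)\le\kappa\mid K(0)=\kappa_0)$ is, by definition of $v$ in Eq.~(\ref{vDefn}), exactly the cumulative distribution function of $K(s)$ given $K(0)=\kappa_0$. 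Because $f_K$ has an atom at $\kappa=0$ but is otherwise a continuous density on $(0,\infty)$, this CDF is continuous except for a jump of height $\exp(-\kappa_0/(1-e^{-s}))$ at $0$; since $v$ evaluated at the atom location already returns the full mass of the atom, the standard argument shows $\Prob(P\le p)=p$ for all $p\in[0,1]$, i.e. $P\sim\mathrm{Unif}[0,1]$. (The only subtlety is at the atom, which contributes $P=\exp(-\kappa_0/(1-e^{-s}))$ with exactly that probability, consistent with uniformity.)

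Finally I would assemble the confidence statement. By definition of $v^{-1}$ in Eq.~(\ref{vInvDefn}) and the strict monotonicity established above, the event $\kappa_0 \in {\cal I}_{\kappa_0}(p_1,p_2\mid K(s)) = [v^{-1}(p_2\mid s,K(s)),\,v^{-1}(p_1\mid s,K(s))]$ is equivalent, after applying the decreasing bijection $v(s,\cdot\mid K(s))$ to all three quantities, to the event $p_1 \le v(s,\kappa_0\mid K(s)) \le p_2$, that is $p_1 \le P \le p_2$. By the uniformity of $P$ this event has probability $p_2 - p_1$, which is exactly the assertion of the theorem. The main obstacle I anticipate is not the probability-integral-transform bookkeeping but the monotonicity-in-$\kappa_0$ claim: one must argue carefully from Eq.~(\ref{KDensity}) — or from the branching interpretation, where the process started from $\kappa_0'>\kappa_0$ dominates the one started from $\kappa_0$ by the obvious coupling of independent sub-populations used in Lemma~\ref{lemma3} — that $v(s,\cdot\mid\kappa)$ is \emph{strictly} decreasing, since without strictness the inverse is not single-valued and the interval endpoints in Eq.~(\ref{kappa0ConfInterval}) are ill-defined.
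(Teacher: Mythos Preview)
Your argument is correct and follows the same three steps as the paper's proof: $v^{-1}$ is decreasing, $P=v(s,\kappa_0\mid K(s))$ is uniform on $[0,1]$ by the probability integral transform, and the event $\kappa_0\in\mathcal{I}_{\kappa_0}$ is equivalent to $p_1\le P\le p_2$. You supply considerably more justification than the paper (which simply asserts the monotonicity as ``clear'' and states uniformity without comment); your one slight slip --- the claim that $\Prob(P\le p)=p$ for \emph{all} $p\in[0,1]$, which fails for $p$ below the extinction-probability atom $q=\exp(-\kappa_0/(1-e^{-s}))$ --- is a defect shared with the paper's proof and is immaterial in the intended application where $K(s)>0$ is observed.
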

\begin{proof}
It is clear that $v^{-1}$ is in general a decreasing function of $p$, with $v^{-1}(0 | s, \kappa) = \infty$ 
and $v^{-1}(1 | s, \kappa) = 0$.  Then 
\begin{eqnarray}
\lefteqn{\Prob({\cal I}_{\kappa_0}(p_1, p_2 | K(s)) \ni \kappa_0)} \nonumber \\
& = & \Prob(v^{-1}(p_1 | s, K(s)) \ge \kappa_0 \ge v^{-1}(p_2 | s, K(s))\nonumber \\
& = & \Prob(p_1\le v(s, \kappa_0|K(s)) \le p_2)\nonumber \\
& = & p_2 - p_2. 
\end{eqnarray}
The second last line follows because $v^{-1}$ is a decreasing function of $p$, and the final line follows because the cumulative distribution 
$v(s, \kappa_0|K(s))$ is uniformly distributed on $[0, 1]$.  
\end{proof}

Thus, for instance, the blue contours marked $v = 0.1$ and $v = 0.9$ define the edges of an 80\% confidence interval on starting configurations $(s, \kappa_0)$ 
for the observed final population $\kappa$.  The small circle in each plot is a ``median estimate'' of the coalescence point which we define as 
the (scaled) starting time and population size for which there is a 50\% probability that the coalescent event is yet to occur, and a 50\% probability 
that the final population will overshoot the observed current observed population size.  
%
%
\section{Likelihood surface given an observed $K(s)$}
\label{sec:LikelihoodSurface}

Naturally one would like to devise a more intuitive description of the likely location of the ancestral coalescent point of a random sample than the 
median estimate defined in the preceding section.  Consider the following scenario.  A BGW process is initiated with an initial scaled population 
$\kappa_0^{\rm true}$, and a random sample on $n$ individuals taken from the descendant population at some later time $s^{\rm true}$.  
The process is further conditioned on the event that the MRCA of the sample coincides with the initial time.  This is in principle 
achieved by creating an arbitrarily large ensemble of processes, and rejecting those for which the required condition is not satisfied to within  
some small tolerance.  Our aim is to determine a function ${\cal L}(s, \kappa_0 ; \kappa)$ with the property that, for any subset $\Omega$ of the 
$(s, \kappa_0)$-plane and observed final population $K(s^{\rm true})=\kappa$,  
\begin{equation}	\label{confidenceIntegral}
\Prob\left((s^{\rm true}, \kappa_0^{\rm true}) \in \Omega\right) = \int_\Omega {\cal L}\left(s, \kappa_0 ; \kappa \right) ds\,d\kappa_0. 
\end{equation}

We will refer to the function ${\cal L}(s, \kappa_0 ; \kappa)$ as a ``likelihood surface".   This surface can be considered as a tool for generating 
confidence regions in the sense that, by tuning $\Omega$ so that the integral in Eq.~(\ref{confidenceIntegral}) has a value 0.95 for instance, 
one creates a $95\%$ confidence region for the point $(s^{\rm true}, \kappa_0^{\rm true})$.  

Subject to a conjecture which will be elucidated in the proof below, we have the following proposition: 
%
\begin{proposition}	\label{proposition1}
The function 
\begin{equation}	\label{LikelihoodDensity}	    
{\cal L}(s, \kappa_0 ; \kappa) = 
\left| \frac{\partial u_n(s, \kappa_0 | \kappa)}{\partial s} \frac{\partial v(s, \kappa_0 | \kappa)}{\partial \kappa_0} - 
        \frac{\partial u_n(s, \kappa_0 | \kappa)}{\partial \kappa_0} \frac{\partial v(s, \kappa_0 | \kappa)}{\partial s} \right|,
\end{equation}
where $K = \kappa$ is the observed current population, the function $u_n(s, \kappa_0 | \kappa)$, defined in Theorem~\ref{theorem1}, 
and $v(s, \kappa_0 | \kappa)$ is defined by Eq.~(\ref{vDefn}) satisfies Eq.~(\ref{confidenceIntegral}).  
\end{proposition}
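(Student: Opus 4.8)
The plan is to interpret the formula for ${\cal L}(s,\kappa_0;\kappa)$ as the Jacobian of a change of variables and to identify the underlying probability structure that makes Eq.~(\ref{confidenceIntegral}) hold. First I would set up the two-parameter family of conditioned processes: fixing the observed final population at $\kappa$, each pair $(s^{\rm true},\kappa_0^{\rm true})$ produces a process that is conditioned both on $K(s^{\rm true})=\kappa$ and on the event ${\cal E}_n(\kappa_0^{\rm true}, s^{\rm true})$ that the sample's MRCA coincides with the initialisation time. The idea is that the two functions $u_n(s,\kappa_0\mid\kappa)$ and $v(s,\kappa_0\mid\kappa)$ should each, in an appropriate sense, be uniformly distributed on $[0,1]$ when evaluated at the true parameters --- $v$ via the probability-integral transform argument already used in the proof of Theorem~\ref{theorem2}, and $u_n$ via a second, analogous transform with respect to $s$. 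If $(U,V):=\bigl(u_n(s^{\rm true},\kappa_0^{\rm true}\mid\kappa),\, v(s^{\rm true},\kappa_0^{\rm true}\mid\kappa)\bigr)$ is (jointly) uniform on the unit square, then by the standard change-of-variables formula the density of $(s^{\rm true},\kappa_0^{\rm true})$ is exactly the absolute value of the Jacobian $\partial(U,V)/\partial(s,\kappa_0)$, which is precisely Eq.~(\ref{LikelihoodDensity}). Formula~(\ref{confidenceIntegral}) then follows by integrating this density over $\Omega$.

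The key steps, in order, are: (i) state carefully the conditioning that defines the ensemble and hence the joint law of $(s^{\rm true},\kappa_0^{\rm true})$ given $\kappa$; (ii) establish that $v(s^{\rm true},\kappa_0^{\rm true}\mid\kappa)$ is uniform on $[0,1]$ --- this is essentially the argument in Theorem~\ref{theorem2}, using that $v$ is the CDF of $K(s)$ in the variable $\kappa$; (iii) establish the analogous statement that $u_n$, viewed as a function of $s$ for fixed $\kappa_0$ (or along the appropriate slicing), performs a probability-integral transform, so that $u_n(s^{\rm true},\kappa_0^{\rm true}\mid\kappa)$ is uniform; (iv) argue that the pair $(U,V)$ is \emph{jointly} uniform on the unit square, i.e.\ that the two transforms are, in the relevant conditional sense, independent; (v) invoke the change-of-variables formula to convert the uniform density on the square into the density~(\ref{LikelihoodDensity}) on the $(s,\kappa_0)$-plane and integrate over $\Omega$ to get Eq.~(\ref{confidenceIntegral}).

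The main obstacle --- and this is exactly the point at which the proposition is flagged as contingent on a conjecture --- is step~(iv), the joint uniformity and independence of $U$ and $V$. Each of $u_n$ and $v$ individually admits a clean probability-integral-transform interpretation: $v$ is manifestly a CDF in $\kappa$, and $u_n$ is a conditional probability that the coalescent event has already occurred, which should increase monotonically along trajectories of increasing $s$. But these two transforms are taken with respect to different conditioning events and in different variables, and there is no a priori reason the resulting random variables should be independent or jointly uniform; proving this would require a genuine distributional identity linking the law of $K(s)$ conditioned on ${\cal E}_n$ to the law conditioned only on $K(0)=\kappa_0$. I would state this as the conjecture, give the heuristic that $u_n$ and $v$ capture ``orthogonal'' pieces of information (the timing of coalescence versus the magnitude of population growth), and note that the numerical simulations of Section~\ref{sec:NumericalSimulation} are offered as empirical support. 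The remaining steps (i)--(iii) and (v) are then routine: (ii) is a verbatim repeat of the Theorem~\ref{theorem2} argument, (iii) requires only monotonicity of $u_n$ in $s$ (which follows from the Bessel-ratio structure of Eq.~(\ref{u_nResult}) and the fact that $w$ is monotone in $s$), and (v) is the textbook Jacobian formula.
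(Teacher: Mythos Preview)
Your proposal is correct and follows essentially the same approach as the paper's own argument: both identify ${\cal L}$ as the Jacobian of the map $(s,\kappa_0)\mapsto(u_n,v)$, invoke Theorem~\ref{theorem2} for the marginal uniformity of $v$, give a probability-integral-transform reading of $u_n$ for its marginal uniformity, and then explicitly flag the joint uniformity (equivalently, independence of the two intervals) as an unproved conjecture before applying the change-of-variables formula. Your write-up is somewhat more systematic in separating the steps and in noting the monotonicity of $u_n$ in $s$ needed for step~(iii), but the substance and the location of the gap are identical to the paper's.
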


\begin{figure}[t]
\begin{center}
\centerline{\includegraphics[width=\textwidth]{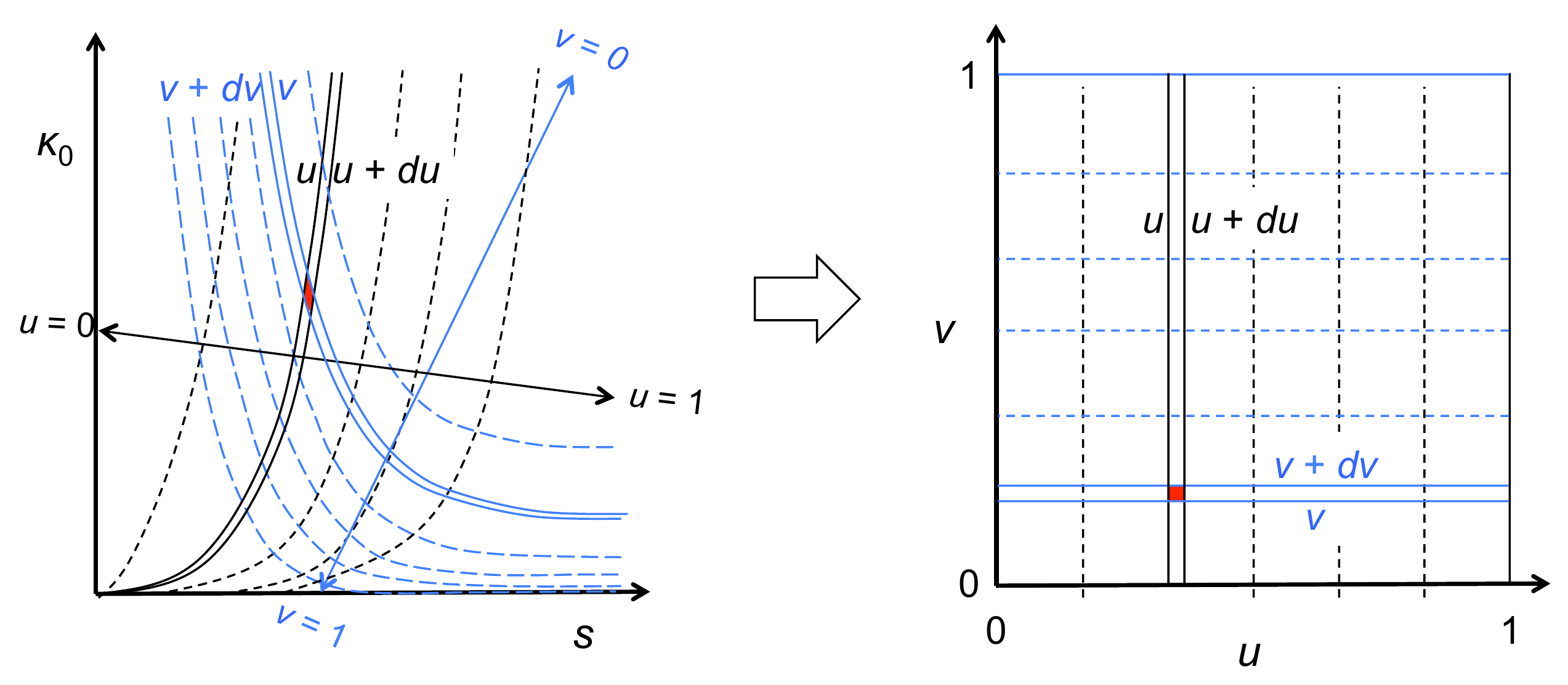}}
\caption{Mapping of the $(s, \kappa_0)$-plane to the $(u, v)$-plane.} 
\label{fig:MappingToUVPlane}
\end{center}
\end{figure}

\begin{proof}
Consider two neighbouring contours in the $(s, \kappa_0)$-plane of the function $u_n$ given by Eq.~(\ref{u_nResult}) or (\ref{u_InfResult}), 
taking values $u$ and $u + du$ respectively, as shown in the left panel of Fig.~\ref{fig:MappingToUVPlane}.  Recall that $u_n(s, \kappa_0 | \kappa)$ 
is the probability that a BGW process initiated at any point on the first contour will result in a sample of size $n$ at time $s$ having a single ancestor 
in the initial population, conditional on $K(0) = \kappa_0$ and $K(s) = \kappa$. Similarly $u(s, \kappa_0 | \kappa) + du(s, \kappa_0 | \kappa)$ is the 
corresponding probability for the second contour.   Then for fixed values of $\kappa_0$ and $\kappa$, the interval 
\begin{equation}
{\cal I}_s(u, u + du | \kappa_0, \kappa) := \{s : u <u_n(s, \kappa_0 | \kappa) < u + du \}, 
\end{equation}
has a probability $du$ of containing the time $s^{\rm true}$ since the MRCA of the sample.  

Furthermore, by Theorem~\ref{theorem2}, at a given value of $s$ the interval delimited by the contours 
$v$ and $v + dv$ shown in the left panel of Fig.~\ref{fig:MappingToUVPlane}, namely 
\begin{equation}
{\cal I}_{\kappa_0}(v, v + dv | s, \kappa) := \{\kappa_0 : v <v(s, \kappa_0 | \kappa) < v + dv \}, 
\end{equation}
has a probability $dv$ of containing the initial population $\kappa_0^{\rm true}$, given an observed final population size $\kappa$.  

These two intervals are random inasmuch as they depend on the final population size $K(s^{\rm true}) | (K(0) = \kappa_0^{\rm true})$ 
of the conditioned BGW process described above.  We {\bf conjecture} that the two intervals are independent, and that therefore the 
probability that the intersection of the two corresponding regions in the $(s, \kappa_0)$-plane contains the point 
$(s^{\rm true}, \kappa_0^{\rm true})$ is $du\,dv$.  
Equivalently, we conjecture that if the $(s, \kappa_0)$-plane is mapped onto the $(u, v)$-plane, as in Fig.~\ref{fig:MappingToUVPlane}, 
then the required function ${\cal L}(s, \kappa_0 ; \kappa)$ maps to a uniform density over $[0, 1] \times [0, 1]$.  
The inverse mapping to the $(s, \kappa_0)$-plane is effected by the Jacobian of the transformation, leading to Eq.~(\ref{LikelihoodDensity}).  
\end{proof}

Contours of the likelihood surface ${\cal L}(s, \kappa_0 ; \kappa)$ are marked in red in Figure~\ref{fig:coalescentContours} for $\kappa = 3$ and 
$\kappa = 2000$.  Details of the calculation of the required derivatives are set out in \ref{sec:DerivatesUAndV}.  
These contour lines have the property that they bound a minimal area confidence region in the $(s, \kappa_0)$-plane for a given integrated likelihood.  
Immediately noticeable is that in both cases the confidence regions are extremely skewed, and the median estimate defined at the end of the previous 
section is well separated from the maximum of the likelihood surface.  

\begin{figure}[t]
\begin{center}
\centerline{\includegraphics[width=\textwidth]{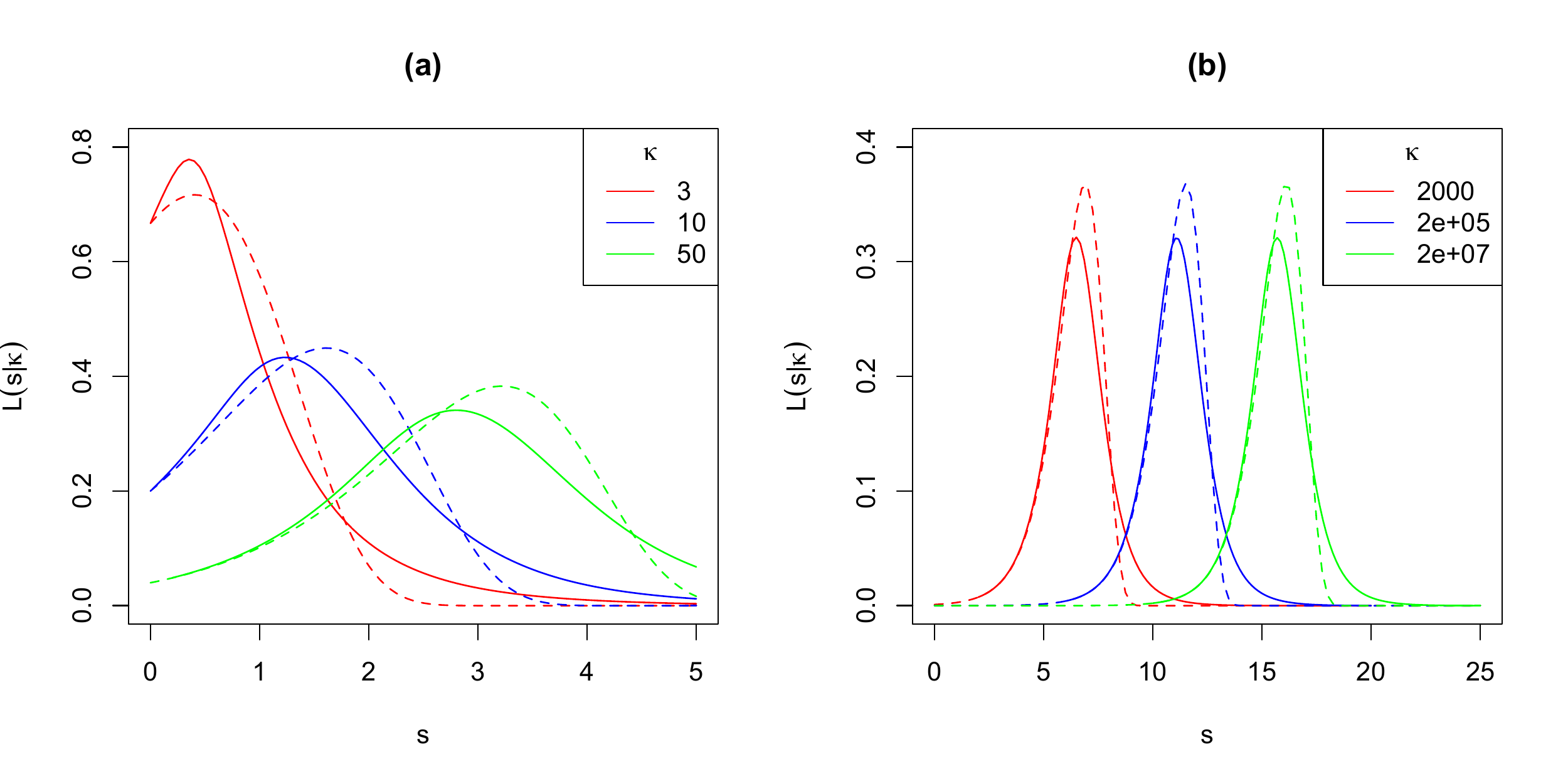}}
\caption{The marginal likelihood ${\cal L}(s;\kappa)$ of the time since coalescence for $n = 2$ individuals sampled from a population of current size 
$\kappa$ for a BGW process (Eq.~(\ref{MarginalLikelihood}), solid curves) and the probability density $f_{S}^{\rm SH}(s | \kappa)$ of the time since 
coalescence for an exponentially growing WF population (Eq.~(\ref{fSSHDefn}), dashed curves).  } 
\label{fig:marginalSLhood}
\end{center}
\end{figure}

The marginal likelihood of the scaled time since coalescence, which we define from Eq.~(\ref{LikelihoodDensity}) as 
\begin{equation}	\label{MarginalLikelihood}
{\cal L}(s ; \kappa) = \int_0^\infty {\cal L}(s, \kappa_0 ; \kappa) d\kappa_0,  
\end{equation}
is plotted in Fig.~\ref{fig:marginalSLhood} for a sample of $n = 2$ individuals and various values of the currently observed scaled population $\kappa$.  
The numerical integration is tricky to evaluate for large values of $\kappa$ and small values of $s$ because of the difficulty in numerically evaluating 
$\partial v/\partial\kappa_0$ along the ridge in the joint density 
${\cal L}(s, \kappa_0 ; \kappa)$ running up to the point $(s, \kappa_0) = (0, \kappa)$ (see Fig.~\ref{fig:coalescentContours}(b), for instance).  
To produce the plots in Fig.~\ref{fig:marginalSLhood}(b) we avoided this problem by integrating by parts to obtain 
\begin{equation}
{\cal L}(s ; \kappa) = \int_0^\infty  
		\left[ \frac{\partial u_n(s, \kappa_0 | \kappa)}{\partial \kappa_0} \frac{\partial v(s, \kappa_0 | \kappa)}{\partial s}  + 
     			  \frac{\partial^2 u_n(s, \kappa_0 | \kappa)}{\partial s \partial \kappa_0} v(s, \kappa_0 | \kappa) \right]  d\kappa_0,
\label{marginalLIntByParts}
\end{equation}
where the second partial derivative of $u_n$ is given in \ref{sec:DerivatesUAndV}, Eq.~(\ref{du_nDsDkappa0}).

Since the marginal likelihood integrates to 1, a comparison can be made with the probability distribution of times since coalescence for a 
sample of size $n = 2$ taken from an exponentially 
growing WF population, as determined by \cite{Slatkin:1991uq}. Their distribution is quoted in terms of a single parameter $\alpha = N_0 r$ where $N_0$
is the final observed haploid population size and $r$ is the exponential growth rate.  In terms of our notation, the expected number of offspring 
per individual per generation is $\lambda = e^r$, and the variance of the number of offspring per individual per generation for a WF model in the 
diffusion limit is $\sigma^2 = 1 + O(1/M(t))$.  Thus the condition $K(s) = 2 M(t)\log\lambda/\sigma^2 = \kappa$ translates to $\kappa = 2\alpha$, 
and Slatkin and Hudson's Eq.~(6) translates to  
\begin{equation}
f_{S}^{\rm SH}(s | \kappa) = \frac{2 e^s}{\kappa} \exp\left(- \frac{2 (e^s - 1)}{\kappa}\right).  
\label{fSSHDefn}
\end{equation}
This distribution is plotted as dashed curves in Fig.~\ref{fig:marginalSLhood}.  Note that the parameters chosen in Fig.~\ref{fig:marginalSLhood}(b) 
match those of Fig.~4 in \cite{Slatkin:1991uq}.  

\begin{figure}[t]
\begin{center}
\centerline{\includegraphics[width=0.6\textwidth]{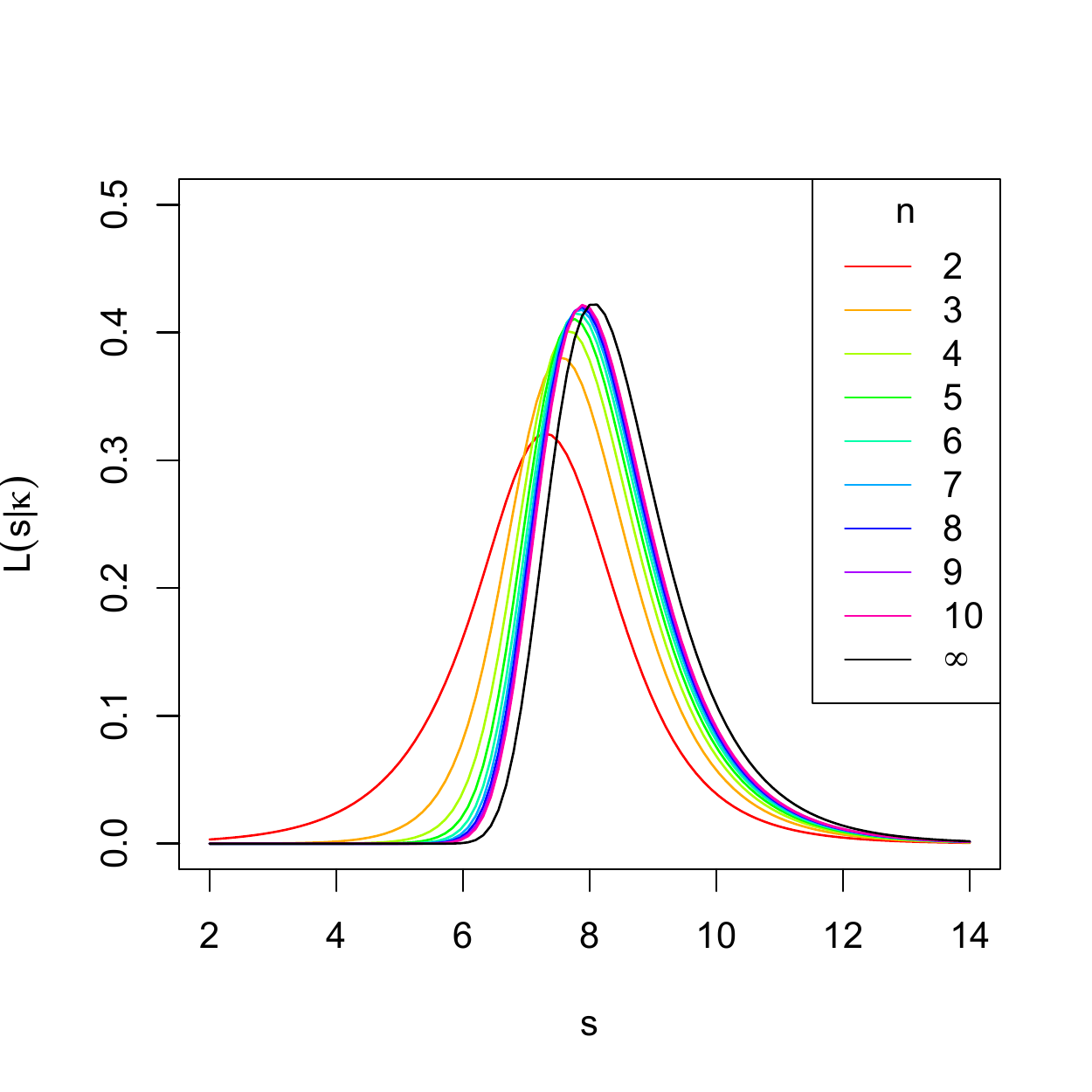}}
\caption{The marginal likelihood ${\cal L}(s;\kappa)$ of the time since coalescence for samples of varying sizes taken from a population of current size 
$\kappa = 4500$ for a BGW process.  } 
\label{fig:marginalSLhoodVaryingN}
\end{center}
\end{figure}

Figure~\ref{fig:marginalSLhoodVaryingN} shows the marginal likelihood of the time since coalescence for samples of size $n = 2, 3, \ldots$.  The 
choice of scaled current population, $\kappa = 4500$, matches the naive model of human population growth during the upper Paleolithic period employed 
by \citet{burden2016genetic} to illustrate an application of BGW processes to mtE.  In common with the predictions of the WF model 
for an exponentially growing population~\citep{Slatkin:1991uq}, we observe that MRCAs for samples of any size are likely to be restricted to a limited 
time range, consistent with a ``star'' shaped phylogeny.  
%
%
\section{Asymptotic behaviour for large $\kappa$}
\label{sec:AsymptoticBehaviour}

Plots of the marginal likelihood and of the Slatkin-Hudson density in Fig.~\ref{fig:marginalSLhood} suggest that these functions each converge to 
an asymptotic shape as $\kappa \rightarrow \infty$.  Specifically, by defining a shifted time scale 
\begin{equation}
s' = s - \log\kappa, 
\label{sPrimeDef}
\end{equation} 
one easily checks from Eq.~(\ref{fSSHDefn}) that 
\begin{equation}
f_S^{\rm SH}(s' + \log\kappa | \kappa) = 2e^{s'} e^{-2e^{s'}}\left(1 + O(\kappa^{-1})\right), 
\end{equation}
as $\kappa \rightarrow \infty$.  Thus the Slatkin-Hudson density rapidly takes on a shape related to the Gumbel distribution for large $\kappa$.  
As we next show, it turns out that each of the likelihood functions defined above for a BGW process also 
converges to an asymptotic shape depending only on $s'$ (and, where relevant, $\kappa_0$) for large $\kappa$.  

\begin{theorem}	\label{theorem3}
The functions $u_n$ and $v$ whose contours are plotted in Fig.~\ref{fig:coalescentContours}, and hence the likelihood surface Eq,(\ref{LikelihoodDensity}) 
and marginal likelihood Eq.~(\ref{MarginalLikelihood}), each take on a universal form as functions of $s'= s - \log\kappa$ and $\kappa_0$ as $\kappa \rightarrow \infty$.  
\end{theorem}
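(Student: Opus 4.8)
The plan is to show that substituting $s = s' + \log\kappa$ (so that $e^s = \kappa e^{s'}$) absorbs all of the $\kappa$-dependence into a limit that depends only on $(s',\kappa_0)$, and that the entire argument is driven by the behaviour of the single quantity $w$ of Eq.~(\ref{wDef}). Inserting $e^s = \kappa e^{s'}$ gives
\[
w = \frac{(\kappa\,\kappa_0\,e^{s'})^{1/2}}{e^{s'} - \kappa^{-1}} = \left(\kappa_0 e^{-s'}\right)^{1/2}\bigl(1 + O(\kappa^{-1})\bigr),
\]
so $w \to w_\infty(s',\kappa_0) := \sqrt{\kappa_0}\,e^{-s'/2}$ as $\kappa\to\infty$, with an analytic $O(\kappa^{-1})$ correction, uniformly on compact subsets of $\{s'\in\mathbb{R},\ \kappa_0>0\}$. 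Writing $g_n(w) := (n!/w^{\,n-1})\,I_n(2w)/I_1(2w)$, Theorem~\ref{theorem1} says $u_n(s,\kappa_0\mid\kappa) = g_n(w)$, a smooth function of $w$ alone, so at once $u_n \to \tilde u_n(s',\kappa_0) := g_n(w_\infty(s',\kappa_0))$, a function of $s'$ and $\kappa_0$ only.

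For $v$ I would first rewrite it via the Feller cumulative distribution function $F_{\rm Feller}(z;\xi) := \int_0^z f_{\rm Feller}(z';\xi)\,dz'$, with $f_{\rm Feller}$ as in Eq.~(\ref{fFeller}). The scaling identity Eq.~(\ref{KDensityFromFFeller}) together with the change of variable $\eta = \kappa_0 e^s z$ in Eq.~(\ref{vDefn}) gives
\[
v(s,\kappa_0\mid\kappa) = F_{\rm Feller}\!\left(\frac{\kappa}{\kappa_0 e^s};\ \frac{\kappa_0}{1-e^{-s}}\right).
\]
With $e^s = \kappa e^{s'}$ the first argument equals \emph{exactly} $1/(\kappa_0 e^{s'})$, while the second equals $\kappa_0/(1-\kappa^{-1}e^{-s'}) = \kappa_0 + O(\kappa^{-1})$; by smoothness of $F_{\rm Feller}$ in its second argument, $v \to \tilde v(s',\kappa_0) := F_{\rm Feller}\bigl(\kappa_0^{-1}e^{-s'};\,\kappa_0\bigr)$, again a function of $(s',\kappa_0)$ only.

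Next I would pass to the likelihood surface Eq.~(\ref{LikelihoodDensity}). Since $s' = s - \log\kappa$ is a pure translation of $s$ at fixed $\kappa$, the Jacobian determinant in Eq.~(\ref{LikelihoodDensity}) is unchanged when $(s,\kappa_0)$ is replaced by $(s',\kappa_0)$. The relevant partial derivatives converge elementarily: $\partial u_n/\partial s = g_n'(w)\,\partial w/\partial s$ and $\partial u_n/\partial\kappa_0 = g_n'(w)\,\partial w/\partial\kappa_0$, with $\partial w/\partial s = -\tfrac12 w\coth(s/2) \to -\tfrac12 w_\infty = \partial w_\infty/\partial s'$ and $\partial w/\partial\kappa_0 = w/(2\kappa_0) \to w_\infty/(2\kappa_0) = \partial w_\infty/\partial\kappa_0$; likewise $\partial v/\partial s$ and $\partial v/\partial\kappa_0$ converge to $\partial\tilde v/\partial s'$ and $\partial\tilde v/\partial\kappa_0$, because the only non-negligible $s$-derivative of the Feller arguments is that of $1/(\kappa_0 e^{s'})$, the derivative $\partial_s\bigl(\kappa_0/(1-e^{-s})\bigr)$ being $O(\kappa^{-1})$. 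Hence ${\cal L}(s'+\log\kappa,\kappa_0;\kappa) \to \tilde{\cal L}(s',\kappa_0) := \bigl|\,\partial_{s'}\tilde u_n\,\partial_{\kappa_0}\tilde v - \partial_{\kappa_0}\tilde u_n\,\partial_{s'}\tilde v\,\bigr|$, locally uniformly in $(s',\kappa_0)$.

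Finally, for the marginal likelihood Eq.~(\ref{MarginalLikelihood}) I would work from the integration-by-parts representation Eq.~(\ref{marginalLIntByParts}) and apply dominated convergence in $\kappa_0$: using $u_n = O(w^{-(n-1)})$ and $v\to 0$ as $\kappa_0\to\infty$ (concentration of the Feller ratio $K(s)/(\kappa_0 e^s)$ about its mean $1$), together with boundedness of the integrand near $\kappa_0 = 0$, one builds a $\kappa$-independent integrable majorant, yielding ${\cal L}(s'+\log\kappa;\kappa) \to \int_0^\infty \tilde{\cal L}(s',\kappa_0)\,d\kappa_0$, a function of $s'$ alone. The main obstacle is precisely this last step: producing the majorant uniformly in $\kappa$, in particular controlling $\partial v/\partial\kappa_0$ near the ridge $s'\approx 0$ of ${\cal L}(s,\kappa_0;\kappa)$ that runs up to $(s,\kappa_0)=(0,\kappa)$ — the same ridge that makes the numerical integration in Fig.~\ref{fig:marginalSLhood}(b) delicate. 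Everything else reduces to the elementary limit $w\to w_\infty$ and the smoothness of $F_{\rm Feller}$.
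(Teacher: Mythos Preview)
Your proposal is correct and follows essentially the same route as the paper: establish $w \to (\kappa_0 e^{-s'})^{1/2}$ so that $u_n = g_n(w)$ converges to $\tilde u_n(s',\kappa_0)$, and rewrite $v$ as $F_{\rm Feller}\bigl(\kappa/(\kappa_0 e^s);\,\kappa_0/(1-e^{-s})\bigr)$ to obtain the limit $\tilde v(s',\kappa_0) = F_{\rm Feller}\bigl(1/(\kappa_0 e^{s'});\,\kappa_0\bigr)$. You actually go further than the paper's own proof, which stops after exhibiting $\tilde u_n$ and $\tilde v$ and leaves the ``hence'' for $\mathcal L$ and the marginal likelihood implicit; your explicit treatment of the derivative convergence and the dominated-convergence step (including the honest flag about the ridge near $s'\approx 0$) fills in detail the paper does not provide.
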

\begin{proof}
From Eqs.~(\ref{u_nResult}), (\ref{u_InfResult}) 
and (\ref{wDef}) we have that $u_n$ and $u_\infty$ are functions only of $w$ which behaves asymptotically as 
$(\kappa_0 e^{-s'})^\frac{1}{2} (1 + O(\kappa^{-1}))$.  It immediately follows that for large $\kappa$, 
\begin{equation}
u_n(s, \kappa_0 | \kappa) \approx \tilde{u}_n(s - \log\kappa, \kappa_0), \qquad 
u_\infty(s, \kappa_0 | \kappa) \approx \tilde{u}_\infty(s - \log\kappa, \kappa_0), \qquad 
\end{equation}
where 
\begin{equation}
\begin{split}
\tilde{u}_n(s', \kappa_0) &= \frac{n!}{w'^{n - 1}} \frac{I_n(2w')}{I_1(2w')}, \\   
\tilde{u}_\infty(s', \kappa_0) &= \frac{w'}{I_1(2w')},  \qquad
w' = (\kappa_0 e^{-s'})^\frac{1}{2}. 
\end{split}
\end{equation} 

The asymptotic form of $v$ is obtained from Eqs.~(\ref{vDefn}) and (\ref{KDensityFromFFeller}) by substituting Eq.~(\ref{sPrimeDef}) and discarding 
terms of $O(\kappa^{-1})$ to obtain 
\begin{equation}
v(s, \kappa_0 | \kappa) \approx \tilde{v}_n(s - \log\kappa, \kappa_0), 
\end{equation}
where
\begin{equation}
\tilde{v}_n(s', \kappa_0) = F_{\rm Feller}\left(\frac{1}{\kappa_0 e^{s'}}; \kappa_0 \right). 
\end{equation}
Here $F_{\rm Feller}(z, \kappa_0) = \int_{0-}^z f_{\rm Feller}(\zeta, \kappa_0) d\zeta$ is the cumulative distribution corresponding to the density defined by Eq.~(\ref{fFeller}).  
\end{proof}

\begin{figure}[t!]
\begin{center}
\centerline{\includegraphics[width=\textwidth]{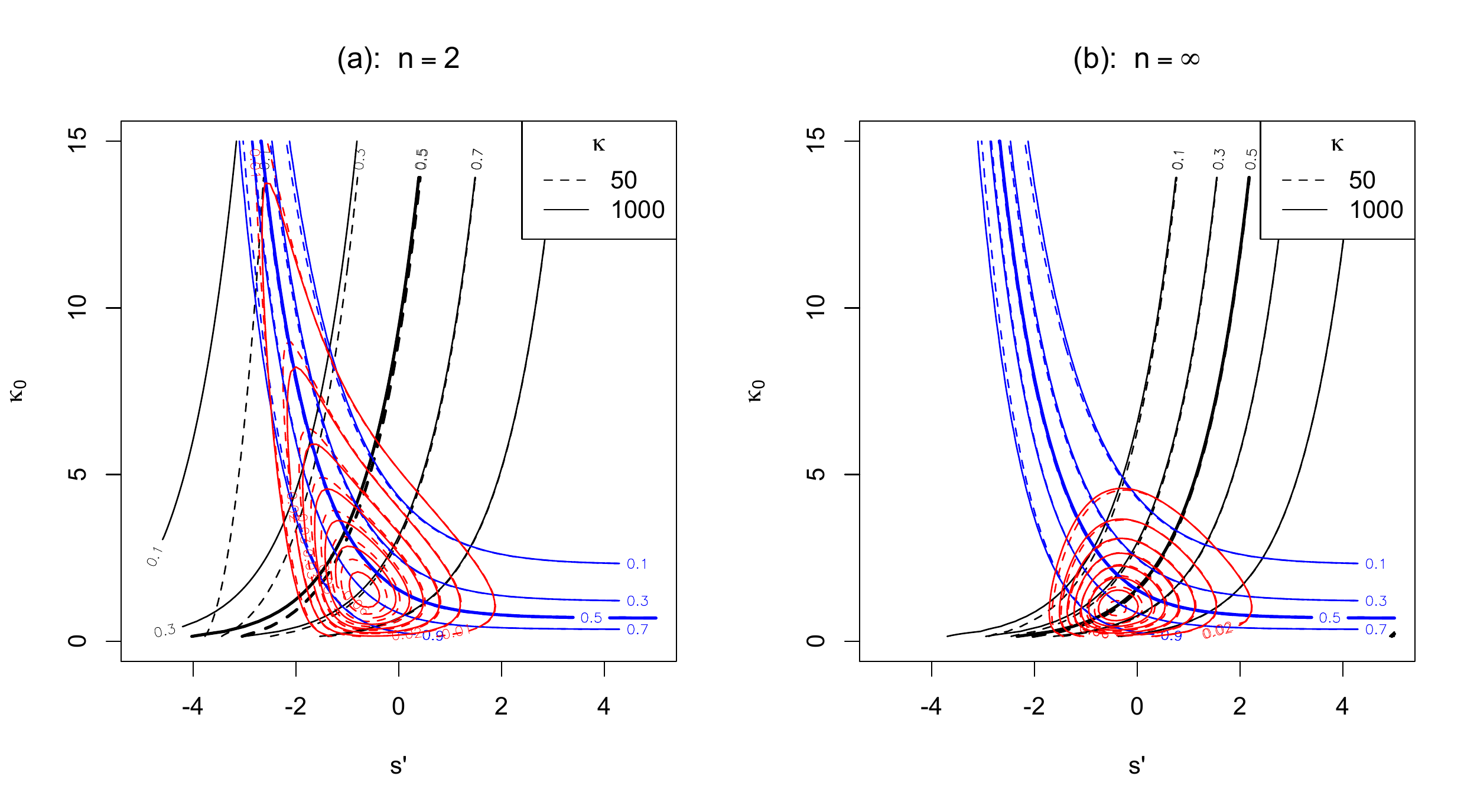}}
\caption{Contours of the probability $u_n(s' + \log \kappa, \kappa_0|\kappa)$ that a random sample of $n$ individuals is descended from a single individual (black lines), 
the probability $v(s' + \log \kappa, \kappa_0|\kappa)$ that the population at time $s$ will not exceed the observed population $\kappa$ (blue lines), and the corresponding 
likelihood surface (red contours) as functions of the shifted time $s'$ and currently observed population $\kappa_0$.   } 
\label{fig:coalescentContoursShifted}
\end{center}
\end{figure}

Figure~\ref{fig:coalescentContoursShifted} shows contours of the functions $u_2$, $u_\infty$, $v$, and contours of associated likelihood surfaces, 
${\cal L}(s' + \log\kappa, \kappa_0 ;\kappa)$ defined by Eq.~(\ref{LikelihoodDensity}),  for relatively large values  
$\kappa = 50$ and $1000$ of the observed current population.  
On the scale shown, the $\kappa = 1000$ contours are indistinguishable from those corresponding to the universal functions 
$\tilde{u}_n$,  $\tilde{u}_\infty$,  and $\tilde{v}$ (not shown).   Note that the universal functions are approached more rapidly for larger values of $n$.  
Contours of the limiting likelihood $\tilde{\cal L}(s', \kappa_0) := \lim_{\kappa \rightarrow \infty} {\cal L}(s' + \log\kappa, \kappa_0 ; \kappa)$ 
and the limiting marginal likelihood $\tilde{\cal L}(s') := \lim_{\kappa \rightarrow \infty} {\cal L}(s' + \log\kappa ; \kappa)$, calculated from 
$\tilde{u}_n(s', \kappa_0)$ and $\tilde{v}(s', \kappa_0)$ are plotted in Fig.~\ref{fig:AnalyseSimulatedGWData}(c) to (f).  We have observed that 
the numerically calculated limiting marginal likelihood for $n = \infty$, that is, the red curve in Figures~\ref{fig:AnalyseSimulatedGWData}(d) and (f), 
is a very close fit to a shifted Gumbel distribution.  So far we have been unable to verify this analytically.  

\begin{figure}[t!]
\begin{center}
\centerline{\includegraphics[width=0.82\textwidth]{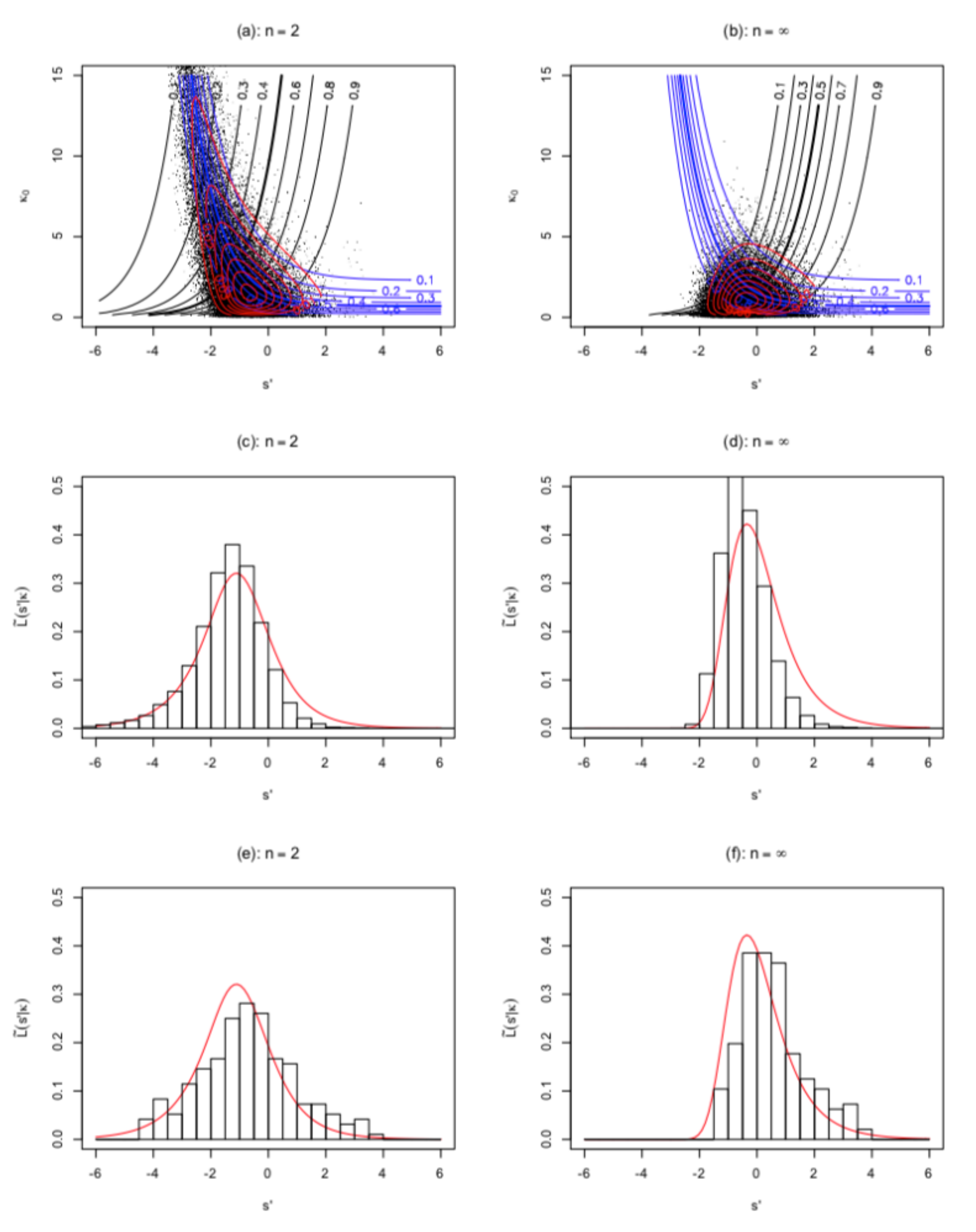}}
\caption{Plots (a) and (b): Contours of the asymptotic universal functions  $\tilde{u}_n(s', \kappa_0)$ (black), $\tilde{v}(s', \kappa_0)$ (blue) and 
likelihood surface $\tilde{\cal L}(s', \kappa_0)$ (red), together with a scatterplot of simulated data as described in the text.  
Plots (c) and (d): Asymptotic marginal likelihood $\tilde{\cal L}(s')$ together with a histogram of the data set of scaled and shifted coalescent 
times $s'$ with data corresponding to $\kappa < 50$ culled.  Plots (e) and (f): the same, but with more severe culling to include only trees 
corresponding to earlier starting times, as described in the text.} 
\label{fig:AnalyseSimulatedGWData}
\end{center}
\end{figure}

%
%
\section{Numerical simulation}
\label{sec:NumericalSimulation}

By exploiting the asymptotic universal functions arising from the $\kappa \rightarrow \infty$ limit we can compare the above theory with simulated 
data without needing to deal with explicit values of the final scaled population size $\kappa$.  

Plotted in Fig.~\ref{fig:AnalyseSimulatedGWData} are simulated data 
produced as follows.  A set of trees was generated, each starting with a single ancestor and evolving as a BGW process, with each 
parent in the process independently producing a Poisson number of offspring with $\lambda = \sigma^2 = 1.01$.  
Trees were generated until a dataset of 40000 family trees, each surviving 
for $t_{\rm sim} = 1000$ generations, was accumulated.  For each tree $n=2$ individuals were randomly chosen from the final generation and 
their ancestry traced to locate the time of their MRCA, and to record the population size at that time.  Each sample was taken from a 
separate tree to avoid correlations between the MRCAs of samples with a common history \citep{ball1990gene,Slatkin:1991uq}.  
For each family tree the time and population size corresponding to the MRCA of the entire population in the final generation 
was also recorded.  These coalescence times and population sizes were then transformed to the scaled quantities $s'$ and $\kappa_0$.  

For consistency with the $\kappa \rightarrow \infty$ limit, a small number of trees with final scaled population sizes $\kappa < 50$ 
were discarded to produce the scatter plots in Figures~\ref{fig:AnalyseSimulatedGWData}(a) and (b) and the histograms of $s'$ 
values in Figures~\ref{fig:AnalyseSimulatedGWData}(c) and (d).  It is clear that the right hand tails of the histograms fall short of the 
theoretical marginal likelihood curves.  To appreciate the cause of this discrepancy, note that the marginal likelihoods in Figures~\ref{fig:AnalyseSimulatedGWData}(c) and (d) drop to almost zero outside a finite range, $s'_{\rm lower} < s' <  s'_{\rm upper}$.  
However, for practical reasons, a numerical simulation is restricted to be of finite time, namely $t_{\rm sim}$ generations, and 
because of this most of the trees in the sample have not existed sufficiently far in the past since the initial founder to cover this 
entire range.  In an attempt to remedy this in Figures~\ref{fig:AnalyseSimulatedGWData}(e) and (f) we have also culled from 
the data all trees for which 
\begin{equation}
t_{\rm sim} \log\lambda < s'_{\rm upper} + \log\kappa.
\end{equation} 
This second culling is more severe, and removes the majority of generated trees.  Setting $s'_{\rm upper} = 5$, the dataset is reduced to 
trees satisfying $50 < \kappa <  141.2$, which reduces the original set to 192 trees.  Nevertheless, the 2 final histograms are in 
closer agreement with the right hand tails of the theoretical marginal likelihoods. 
%
%
\section{Conclusions}
\label{sec:Conclusions}

We have addressed the problem of establishing confidence regions for the scaled time $s = t \log \lambda$ since coalescence to a MRCA 
of a random sample and the scaled population population size $\kappa_0 = (2 m_0 \log\lambda)/\sigma^2$ at the time of the MRCA under 
the assumption that the current observed population is the result of a BGW process that has been evolving since some unknown time in 
the past.  Here $t$ is the number of time steps since the MRCA and $m_0$ is the corresponding population size.  The scalings are defined 
in terms of the mean $\lambda$ and variance $\sigma^2$ of the number of offspring per parent.  The mean $\lambda$ is assumed to be 
close to and slightly larger than 1, and $t$ is assumed to be large, so that the diffusion limit can be applied.  After scaling via Eq.~(\ref{sAndKDefns}) 
to implement the diffusion limit, the only free parameter in the problem of estimating $s$ and $\kappa_0$ is the currently observed scaled population 
$\kappa = 2 m \log\lambda/\sigma^2$, where $m$ is the currently observed physical population size.   

The approach differs markedly from earlier analyses of the MRCA in a BGW diffusion \citep{o1995genealogy,harris2019coalescent}
in that the BGW process is not initiated from a pre-specified point in the past.  Lacking information about the starting conditions 
is not a problem for tracing ancestry in WF-like models 
\citep{Slatkin:1991uq} which are inherently backward-looking in time.  In such models Kingman's coalescent is immediately applicable, and 
the concept of a probability density for the time since a MRCA is meaningful.  For a BGW process, however, causality runs forward in time, 
and to be on firm ground we have couched our results in terms of confidence regions.  Nevertheless some interpretation is in order.  

To return to the question posed in the opening paragraph of this paper, suppose we are confronted with a population which we are told 
has evolved through a BGW process.  Although the process is stochastic and its history unknown to us, 
one maintains an objective belief in the existence of a single realised path through the $(s, \kappa_0)$-plane leading 
to the current observed scaled population $\kappa$, and at some point that unknown path will have passed through co-ordinates corresponding 
to the MRCA of the current population.  Given that a BGW process is Markovian, the current population can be considered to be the result 
of a BGW process initiated from any point along the path, and in particular, from those co-ordinates corresponding to the MRCA.  
Our claim is that Eq.~(\ref{LikelihoodDensity}) calculated from the observed 
value of $\kappa$ defines a ``likelihood surface" in the sense that any subset $\Omega$ of the $(s, \kappa_0)$-plane  has a probability 
$\int_\Omega {\cal L}(s, \kappa_0 ; \kappa) ds d\kappa_0$ of containing the unknown but extant co-ordinates corresponding to the MRCA 
of the current population.  

Having made this claim, we mention two final caveats.  Firstly, note that Proposition~\ref{proposition1} is contingent on 
a conjecture that the likelihood surface over the $(u, v)$-plane is a uniform joint density.  Although the marginal densities in 
$u$ and $v$ are uniform, this is not necessarily the case for the joint density.  Indeed any density of the form $1 + \psi_1(u - v) + \psi_2(u + v)$ 
where $\psi_1$ and $\psi_2$ are periodic functions of a single variable with period 1 and which integrate to zero over $[0, 1]$ 
will have uniform marginal densities.
Without further mathematical proof, uniformity of the likelihood surface over the $(u, v)$-plane remains a conjecture.   

Secondly, if the likelihood surface ${\cal L}(s, \kappa_0; \kappa)$ is well defined, one feels it should be possible, at least in principle if not in practice, 
to devise a numerical simulation 
which will confirm its interpretation.  The simulation in Section~\ref{sec:NumericalSimulation} attempts to do this, but as we have seen, 
there is some ambiguity about how to match this particular simulation with theory.  Devising an appropriate simulation which is guaranteed to 
include the MRCA of the final population but does not impose any prior constraint on the population size at the time of the MRCA remains 
an open problem.  

%
%
\section*{Acknowledgements}

CB would like to thank Robert Griffiths, Robert Clark and Francis Hui for interesting and helpful discussions.  
%
%
\appendix
%
%
\section{Derivatives of $u_n(s, \kappa_0 | \kappa)$ and $v(s, \kappa_0 | \kappa)$}
\label{sec:DerivatesUAndV}

Derivatives of $u_n(s, \kappa_0 | \kappa)$ and $v(s, \kappa_0 | \kappa)$ are needed for the numerical determination of the likelihood 
surface defined by Eq.~(\ref{LikelihoodDensity}) and marginal likelihood calculation Eq.~(\ref{marginalLIntByParts}).  
Derivatives of $u_n(s, \kappa_0 | \kappa)$ are straightforward to calculate from Eqs.~(\ref{u_nResult}) and (\ref{wDef}).  
Using the identity \citep[][p376]{Abramowitz:1965sf}
\begin{equation}
\frac{d}{dz}\left(\frac{I_n(z)}{z^n}\right) = \frac{I_{n + 1}(z)}{z^n}, 
\label{besselDeriv}
\end{equation}
we obtain 
\begin{equation}
\frac{\partial u_n(s, \kappa_0 | \kappa)}{\partial s}  = -\frac{n!}{w^{n - 2}} \coth(\tfrac{1}{2}s) \Phi_n(w), \qquad 
\frac{\partial u_n(s, \kappa_0 | \kappa)}{\partial \kappa_0}  = \frac{n!}{\kappa_0 w^{n - 2}} \Phi_n(w),  
\label{uNDerivs}
\end{equation}
and
\begin{equation}
 \frac{\partial^2 u_n(s, \kappa_0 | \kappa)}{\partial s \partial \kappa_0} = 
 	\frac{- n!}{\kappa_0 w^{n - 2}} \left( \Phi_n(w) + w \Psi_n(w) \right) \coth (\tfrac{1}{2} s),  
\end{equation} 
where 
\begin{equation}
\Phi_n(w) = \frac{I_1(2w) I_{n + 1}(2w) - I_2(2w) I_n(2w)}{I_1(2w)^2}, 
\label{PhiDef}
\end{equation}
and
\begin{eqnarray}
\Psi_n(w) & = & \frac{1}{I_1(2w)^3} \left\{ I_1(2w)[I_1(2w) I_{n + 2}(2w) - I_3(2w)I_n(2w) ] \right. \nonumber \\
	&  & \quad -\, \left. 2I_2(2w) [I_1(2w) I_{n + 1}(2w) - I_2(2w)I_n(2w) ] \right\} .  
\end{eqnarray}
From Eq.~(\ref{u_InfResult}) the analogous derivatives for $n = \infty$ are 
\begin{equation}
\frac{\partial u_\infty(s, \kappa_0 | \kappa)}{\partial s}  = w^2 \coth(\tfrac{1}{2}s) \Phi_\infty(w), \qquad 
\frac{\partial u_\infty(s, \kappa_0 | \kappa)}{\partial \kappa_0}  = -\frac{w^2}{\kappa_0} \Phi_\infty(w),  
\label{uNDerivs}	 
\end{equation}
and
\begin{equation}
 \frac{\partial^2 u_\infty(s, \kappa_0 | \kappa)}{\partial s \partial \kappa_0} = 
 	\frac{w^2}{\kappa_0} \left( \Phi_\infty(w) + w \Psi_\infty(w) \right) \coth (\tfrac{1}{2} s),  
\label{du_nDsDkappa0}
\end{equation} 
where 
\begin{equation}
\Phi_\infty(w) = \frac{I_2(2w)}{I_1(2w)^2},  
\label{PhiDef}
\end{equation}
and
\begin{eqnarray}
\Psi_\infty(w) & = & \frac{I_1(2w) I_3(2w) - 2 I_2(2w)^2}{I_1(2w)^3} .  
\end{eqnarray}

Derivatives of $v(s, \kappa_0 | \kappa)$ are less straightforward.  The derivative with respect to $s$ can be calculated from the 
forward-Kolmogorov equation, Eq.~(\ref{fwdKolmog}), as follows; 
\begin{eqnarray}
\frac{\partial v(s, \kappa_0 | \kappa)}{\partial s}  & = & - \int_\kappa^\infty \frac{\partial f_K(\eta; s |  \kappa_0)}{\partial s} d\eta \nonumber \\
	& = & \int_\kappa^\infty 
		\left[\frac{\partial}{\partial \eta}(\eta f_K(\eta; s |  \kappa_0)) - \frac{\partial^2}{\partial \eta^2}(\eta f_K(\eta; s |  \kappa_0))\right] d\eta \nonumber \\
	& = & - \kappa f_K(\kappa; s |  \kappa_0) +  \frac{\partial}{\partial \kappa}\left(\kappa f_K(\kappa; s |  \kappa_0)\right) \nonumber \\   
	& = & w\left[ \left(\frac{1 - \kappa}{\kappa} - \frac{1}{e^s - 1}\right) I_1(2w) + \frac{w}{\kappa} I_2(2w) \right] 
					\exp\left\{-\frac{\kappa_0 e^s + \kappa}{e^s - 1}\right\},  \nonumber \\       
\label{dvDs}	
\end{eqnarray}
where Eqs.~(\ref{KDensity}), (\ref{wDef}) and (\ref{besselDeriv}) have been used in the last line.  
Evaluation of the derivative of $v$ with respect to $\kappa_0$ involves a numerical integration.  Some straightforward but lengthy algebra gives 
\begin{eqnarray}
\frac{\partial}{\partial \kappa_0} f_K(\kappa; s | \kappa_0) & = & - \frac{\delta(\kappa)}{1 - e^{-s}}  \exp\left\{-\frac{\kappa_0}{1 - e^{-s}}\right\} \nonumber \\
&  & + \, \frac{1}{2 \cosh s - 1} \exp\left\{ - \frac{\kappa_0 e^s + \kappa}{e^s - 1} \right\} \times \nonumber \\
& & \qquad\qquad \left[ I_2(2w) + \left(1 - \frac{\kappa_0 e^s}{e^s - 1} \right) \frac{I_1(2w)}{w} \right].    
\end{eqnarray}
Then from Eq.~(\ref{vDefn}) 
\begin{eqnarray}
\frac{\partial v(s, \kappa_0 | \kappa)}{\partial \kappa_0}  & = & \int_0^\kappa \frac{\partial}{\partial \kappa_0} f_K(\eta; s | \kappa_0) \, d\eta \nonumber \\
& = & - \frac{1}{1 - e^{-s}}  \exp\left\{-\frac{\kappa_0}{1 - e^{-s}}\right\} \nonumber \\
&  & + \, \frac{1}{2 \cosh s - 1} \int_0^\kappa \exp\left\{ - \frac{\kappa_0 e^s + \eta}{e^s - 1} \right\} \times \nonumber \\
& & \qquad\qquad \left[ I_2(2\omega) + \left(1 - \frac{\kappa_0 e^s}{e^s - 1} \right) \frac{I_1(2\omega)}{\omega} \right] \, d\eta,     
\label{dvDkappa0}	
\end{eqnarray}
where 
\begin{equation}
\omega = \frac{(\eta \kappa_0 e^s)^\frac{1}{2}}{e^s - 1}. 
\end{equation}

\section*{References}
\bibliographystyle{elsarticle-harv}\biboptions{authoryear}





\end{document}